\documentclass[11pt]{article}
\usepackage[a4paper,margin=27mm]{geometry}
\usepackage{amsmath,amssymb,amsthm,mathtools}
\usepackage{fontspec}
\usepackage{etoolbox}
\usepackage{microtype,booktabs}
\usepackage{tikz,needspace}
\usetikzlibrary{arrows.meta,calc}
\pretocmd{\section}{\Needspace{8\baselineskip}}{}{}
\pretocmd{\subsection}{\Needspace{6\baselineskip}}{}{}
\usepackage[hidelinks,unicode]{hyperref}
\usepackage{bookmark}
\allowdisplaybreaks[2]
\DeclareMathOperator{\E}{\mathbb{E}}
\renewcommand{\P}{\mathop{\mathbb{P}}\nolimits}
\DeclareMathOperator{\SO}{SO}
\DeclareMathOperator{\BO}{BO}
\DeclareMathOperator{\RO}{RO}
\DeclareMathOperator{\FB}{FB}
\DeclareMathOperator*{\argmax}{arg\,max}
\DeclareMathOperator*{\argmin}{arg\,min}
\newtheorem{theorem}{Theorem}[section]
\newtheorem{lemma}[theorem]{Lemma}
\newtheorem{proposition}[theorem]{Proposition}
\newtheorem{observation}[theorem]{Observation}
\newtheorem*{mainresult}{Main result}
\theoremstyle{definition}
\newtheorem{definition}[theorem]{Definition}
\theoremstyle{plain}

\newenvironment{proofsketch}[1][Proof sketch]
{\begin{proof}[#1]}
    {\end{proof}}
\title{An Improved Upper Bound for the Random-Offerer Mechanism in Bilateral Trade via Recursive Hard Instances}
\author{%
  \begin{tabular}{cc}
    Heyang Shang & Jing Chen\\[0.3em]
    \normalsize Tsinghua University & \normalsize Tsinghua University\\[0.2em]
    \small\href{mailto:shanghy22@mails.tsinghua.edu.cn}{\texttt{shanghy22@mails.tsinghua.edu.cn}} &
    \small\href{mailto:jchencs@mail.tsinghua.edu.cn}{\texttt{jchencs@mail.tsinghua.edu.cn}}
  \end{tabular}%
}
\date{}
\makeatletter
\renewcommand{\@maketitle}{%
  \newpage\null\vskip 1em
  \begin{center}
    {\Large\bfseries\@title\par}
    \ifx\@author\@empty\else\vskip 1em{\large\@author\par}\fi
    \ifx\@date\@empty\else\vskip 0.5em{\@date\par}\fi
  \end{center}\vskip 1em}
\makeatother
\hypersetup{%
  pdftitle={An Improved Upper Bound for the Random-Offerer Mechanism in Bilateral Trade via Recursive Hard Instances},
  pdfauthor={Heyang Shang, Jing Chen}
}

\begin{document}
\maketitle
\begin{abstract}
  We study the worst-case efficiency of the random-offerer mechanism in bilateral trade relative to the first-best gains from trade. We construct a family of independent buyer value and seller cost distributions whose ratio converges to \(0.436943488\ldots\), improving the previous upper bound of \(0.460242308\ldots\). Our construction combines recursively nested seller distributions with downward transport of buyer values. The recursive structure allows contributions from different scales to accumulate, leading to stronger hard instances. We analyze this construction through a recurrence and obtain an analytic characterization of the limiting constant.
\end{abstract}

\section{Introduction}\label{sec:introduction}

Bilateral trade is a basic model for studying trade efficiency under incentive
constraints. A seller owns an item, and a buyer wishes to acquire it. Their
cost and value are drawn independently from known distributions, but each agent
privately observes their own realization. Trade creates positive gains from
trade whenever the buyer's value exceeds the seller's cost. Without incentive
constraints, trading in every such case achieves the first-best gains from
trade. The classical impossibility theorem of Myerson and
Satterthwaite~\cite{MS83}, however, shows that under standard nondegeneracy
conditions, full efficiency cannot be achieved together with Bayesian incentive
compatibility, individual rationality, and budget balance. This raises an
approximation question: how much of the first-best gains from trade can a
mechanism satisfying these constraints retain?

Characterizing the optimal feasible mechanism generally requires a detailed
analysis of both agents' distributions. Studying the efficiency guarantees of
simple mechanisms therefore provides one approach to this question. The
random-offerer mechanism, introduced by Br\"ustle, Cai, Wu, and
Zhao~\cite{BCWZ17}, chooses the seller or the buyer with equal probability to
make an offer maximizing their expected profit; the other agent then decides
whether to accept. This mechanism is known to achieve at least half of the
second-best gains from trade, the maximum attainable under Bayesian incentive
compatibility, individual rationality, and budget balance. We study its
performance relative to the first-best benchmark and denote its worst-case
ratio by
\[
  \rho_{\mathrm{RO}}
  :=\inf_{0<\FB<\infty}\frac{\RO}{\FB},
\]
where the infimum ranges over independent buyer value and seller cost
distributions. Determining this ratio requires both guarantees valid for all
distributions and hard instances that exhibit the mechanism's efficiency loss.

The \(1/2\) guarantee relative to the second-best benchmark does not directly
extend to the first-best benchmark. Cai, Goldner, Ma, and
Zhao~\cite{CGMZ21} constructed bilateral trade instances with
\(\RO/\FB<1/2\). Independently, Babaioff, Dobzinski, and
Kupfer~\cite{BDK21} gave an explicit instance with a ratio below \(0.495\). Cai,
Gupta, Li, and Mehta~\cite{CGLM26} used AI-guided evolutionary search to find
seller distributions and paired them with discrete equal-revenue buyer
distributions. Their instance satisfies \(\FB/\RO\ge2.0749\), corresponding to
an upper bound of approximately \(0.481951\) on \(\RO/\FB\). Jo~\cite{Jo26}
subsequently gave an explicit family of distributions that improved the upper
bound on \(\rho_{\mathrm{RO}}\) to \(0.460242308085529\). These results show
that even with one buyer and one seller, the random-offerer mechanism can lose
more than half of the first-best gains from trade.

In the other direction, Deng, Mao, Sivan, and Wang~\cite{DMSW22} established
the first constant lower bound,
\(\rho_{\mathrm{RO}}\ge1/8.23\approx0.121507\).
Fei~\cite{Fei22} subsequently improved the lower bound to approximately
\(0.317844\), Hartline and Wang~\cite{HW25} provided a corresponding geometric
analysis, and Jo~\cite{Jo26} improved the lower bound to \(1/\pi\). Bei, Wu,
and Zhou~\cite{BWZ26} further proved an \(e\)-approximation guarantee for a
generalized random-offerer mechanism in matching markets, yielding
\(\rho_{\mathrm{RO}}\ge1/e\) in bilateral trade. The worst-case ratio in
bilateral trade remains undetermined.

\subsection{Our result}

We construct a family of independent buyer and seller distributions that
further improves the upper bound on the worst-case ratio of the random-offerer
mechanism to \(0.436943488\ldots\). The exact constant \(\rho_*\) is defined by
the system of equations in Section~\ref{subsec:parameters}.

\begin{mainresult}
  For every \(\eta>0\), there exist independent buyer value and seller cost
  distributions such that \(0<\FB<\infty\) and
  \[
    \left|\frac{\RO}{\FB}-\rho_*\right|<\eta,
    \qquad \rho_*\approx0.436943488.
  \]
  Hence
  \[
    \rho_{\mathrm{RO}}\le\rho_*<0.436944.
  \]
\end{mainresult}

Together with the existing efficiency guarantee, this gives
\[
  \frac1e\le\rho_{\mathrm{RO}}\le\rho_*<0.436944.
\]
for the worst-case ratio of the random-offerer mechanism in bilateral trade.
The upper bound is approached by distributions obtained through finitely many
recursive steps; the infinite-layer distribution itself need not be used as a
trade instance. Under the \(\FB/\RO\) convention used by Cai et
al.~\cite{CGLM26}, our result corresponds to a worst-case approximation factor
of at least \(1/\rho_*\approx2.28862548\).

\subsection{Overview of the construction}

Our construction starts with a seller distribution that can be analyzed
exactly. It has an atom at its lowest cost, and its cumulative distribution
function is hyperbolic above that cost. The buyer distribution is obtained by
moving values downward from a truncated equal-revenue distribution. Preserving
the mass at the highest value and not increasing the tail probability at any
offer price preserves the seller's optimal offers and the gains from trade of
the seller-offering mechanism. Subject to this constraint, moving each buyer
value to an optimal point no higher than its original value minimizes its
contribution to \(\BO-\lambda\FB\) pointwise. The search for hard instances
thus reduces to analyzing the surplus-gap function for a fixed seller
distribution.

To further reduce the ratio, we scale an existing seller distribution down and
use it to replace the atom at the lowest cost, while retaining the outer
hyperbolic structure. Repeating this operation yields what we call
\emph{recursive seller distributions}: each step embeds a scaled copy of the
preceding distribution beneath a new hyperbolic layer. The equal-revenue
density used in the buyer construction has the form \(u^{-2}\), which exactly
offsets the scaling of costs and preserves the inner distribution's
contribution to the normalized surplus gap. The main recurrence decomposes the
contribution after adding one layer into an inner term scaled by the embedded
probability mass and an explicit outer term, with an error bound independent of
the inner distribution and the recursion depth. Unrolling the recurrence
reduces the recursive construction to a geometric series, and optimizing two
probability parameters determines the constant above.

Section~\ref{sec:model} introduces the model and expresses gains from trade as
conditional functions for a fixed seller distribution. Section~\ref{sec:ideal}
analyzes the simple single-layer distribution and the downward movement of
buyer values. Section~\ref{sec:construction} presents the recursive hard
instances, the recurrence, and the parameter choice, and proves the main
result. Detailed proofs of the recurrence and the optimal-offer results appear
in Appendix~\ref{sec:proofs}.

\section{Model}\label{sec:model}

\subsection{Bilateral trade}

Consider a buyer and a seller who owns a single item.
The seller's cost is a random variable \(S\ge0\), and the buyer's value for
the item is a random variable \(B\ge0\).
They are independent, with cumulative distribution functions \(F\) and
\(F_B\), respectively. Both distributions are known to both agents, but each
agent knows only their own type.
We allow atoms and consistently use \(B\ge p\) and \(S\le p\) as the conditions
for accepting an offer at price \(p\).

Both agents have quasilinear utilities. If trade occurs at price \(p\), the
buyer receives utility \(B-p\), and the seller receives utility \(p-S\).
Otherwise, both utilities are \(0\).
The gains from a trade are therefore \(B-S\), independently of its price.
Throughout, we measure efficiency by expected gains from trade.

\subsection{Efficiency benchmarks and the random-offerer mechanism}

\subsubsection{First-best gains from trade}

The first-best mechanism ignores incentive constraints and trades if and
only if \(B\ge S\). Its expected gains from trade are
\begin{equation*}
  \FB(F,F_B):=\E[(B-S)_+],
  \qquad (t)_+:=\max\{t,0\}.
\end{equation*}
We consider pairs of distributions satisfying \(0<\FB<\infty\).

\subsubsection{Seller-offering mechanism}

A seller with cost \(s\) chooses an offer maximizing their expected profit
under the buyer distribution:
\begin{equation}
  p_{\mathrm{S}}(s)\in
  \argmax_{p\ge s}(p-s)\P(B\ge p).
  \label{eq:sellerprice}
\end{equation}
The buyer accepts if \(B\ge p_{\mathrm{S}}(s)\) and pays this price to the seller.
When several offers are optimal and the maximum profit is positive, the
seller chooses the highest such price. When the maximum profit is zero,
we prescribe the offer \(s\).
The expected gains from trade of the seller-offering mechanism are denoted by
\begin{equation*}
  \SO(F,F_B):=
  \E\bigl[(B-S)\mathbf{1}_{\{B\ge p_{\mathrm{S}}(S)\}}\bigr].
\end{equation*}

\subsubsection{Buyer-offering mechanism}

A buyer with value \(b\) chooses an offer maximizing their expected profit
under the seller distribution.
When several offers are optimal, the buyer chooses the lowest such price:
\begin{equation}
  p_F(b):=\min\argmax_{0\le p\le b}(b-p)F(p).
  \label{eq:buyerprice}
\end{equation}
The seller accepts if \(S\le p_F(b)\), and trade occurs at price \(p_F(b)\).
The expected gains from trade of the buyer-offering mechanism are denoted by
\begin{equation*}
  \BO(F,F_B):=
  \E\bigl[(B-S)\mathbf{1}_{\{S\le p_F(B)\}}\bigr].
\end{equation*}

\subsubsection{Random-offerer mechanism}

The random-offerer mechanism runs the seller-offering and buyer-offering
mechanisms with equal probability. Its expected gains from trade are
\begin{equation}
  \RO(F,F_B):=\frac{\SO(F,F_B)+\BO(F,F_B)}2.
  \label{eq:RO}
\end{equation}
When the pair of distributions is fixed, we omit the distribution arguments
and write \(\FB,\SO,\BO,\RO\).
Equations~\eqref{eq:sellerprice} and~\eqref{eq:buyerprice} maximize the offering
agent's expected profit, whereas \(\SO\) and \(\BO\) measure the sum of the
agents' utilities, namely the gains from trade.

\subsection{Worst-case ratio and the construction objective}

We study the worst-case ratio of the random-offerer mechanism's gains from
trade to the first-best gains from trade:
\begin{equation*}
  \rho_{\mathrm{RO}}
  :=\inf_{\substack{F,F_B\\0<\FB(F,F_B)<\infty}}
  \frac{\RO(F,F_B)}{\FB(F,F_B)}.
\end{equation*}
Thus, any pair of distributions with a small ratio gives an upper bound on
\(\rho_{\mathrm{RO}}\).

Fix \(0<\lambda<1\). By~\eqref{eq:RO}, every instance satisfying
\(0<\FB<\infty\) has
\begin{equation*}
  \frac{\RO}{\FB}-\frac\lambda2
  =\frac{\SO+\BO-\lambda\FB}{2\FB}.
\end{equation*}
Hence, an instance with \(\SO+\BO-\lambda\FB\le0\) gives
\(\rho_{\mathrm{RO}}\le\lambda/2\).
For a family of approximating instances, controlling the surplus gap relative
to \(\FB\) can likewise make the ratio converge to \(\lambda/2\).
Our construction focuses on this surplus gap.

\subsection{Conditional gains from trade for a fixed seller distribution}
\label{subsec:conditionalgft}

For a fixed seller distribution \(F\), the buyer's offer depends only on the
buyer value \(b\). We first express the first-best and buyer-offering gains
from trade as functions of \(b\), and then take expectations over the buyer
distribution.
The first-best gains from trade at buyer value \(b\) are
\begin{equation}
  D_F(b):=\E[(b-S)_+]=\int_0^b F(s)\,ds,
  \label{eq:D}
\end{equation}
The gains from trade of the buyer-offering mechanism are
\begin{equation}
  C_F(b):=\E[(b-S)\mathbf{1}_{\{S\le p_F(b)\}}]
  =D_F(p_F(b))+(b-p_F(b))F(p_F(b)).
  \label{eq:C}
\end{equation}
In~\eqref{eq:C}, the second term is the buyer's expected profit from the offer,
and the first term is the seller's expected utility from trade.
By independence,
\begin{equation*}
  \FB=\E[D_F(B)],\qquad \BO=\E[C_F(B)].
\end{equation*}

Define the surplus-gap function
\begin{equation*}
  h_{\lambda,F}(b):=C_F(b)-\lambda D_F(b).
\end{equation*}
It records the contribution of buyer value \(b\) to \(\BO-\lambda\FB\) and satisfies
\begin{equation}
  \SO+\BO-\lambda\FB
  =\SO+\E[h_{\lambda,F}(B)].
  \label{eq:gap}
\end{equation}
For a fixed seller distribution, the construction objective is therefore to
control the seller-offering gains from trade \(\SO\) while choosing a buyer
distribution that makes \(\E[h_{\lambda,F}(B)]\) as small as possible.
The next section uses a simple seller distribution to show how to preserve
\(\SO\) while minimizing this expectation pointwise.

\section{A simple distribution}\label{sec:ideal}
Before constructing the final hard instances, we analyze a single-layer seller
distribution that admits exact calculations and explain how to choose a
corresponding buyer distribution to minimize the surplus gap in~\eqref{eq:gap}.

Let \(0<x<1\) be a parameter, and define the seller's cumulative distribution
function by
\begin{equation*}
  Q_x(s)=
  \begin{cases}
    0,              & s<0,        \\
    \dfrac{x}{1-s}, & 0\le s<1-x, \\
    1,              & s\ge1-x.
  \end{cases}
\end{equation*}
In this section, when there is no ambiguity, we abbreviate the distribution
subscript \(Q_x\) on \(C,D,h,H\) as \(x\).
When several offers are optimal, the seller chooses the highest price,
and the buyer chooses the lowest price.\footnote{These conventions can be
  implemented in the limit by arbitrarily small perturbations of our hard
  instances, as in Babaioff, Dobzinski, and Kupfer~\cite{BDK21}.
  The perturbations eliminate ties that affect gains from trade while changing
  those gains by an arbitrarily small amount, so the upper bound holds under
  any tie-breaking rule.}

\subsection{Buyer offers and \texorpdfstring{$C,D$}{C and D}}
For \(0\le p<1-x\),
\[
  \frac{d}{dp}\left(\frac{x(b-p)}{1-p}\right)
  =\frac{x(b-1)}{(1-p)^2}.
\]
Thus, when \(b<1\), the buyer's optimal offer is \(0\). When \(b=1\), all
prices in \([0,1-x]\) yield the same profit, so our tie-breaking convention
again selects \(0\). When \(b>1\), the buyer's optimal offer is \(1-x\).
Equations~\eqref{eq:D}--\eqref{eq:C} give
\begin{equation*}
  \begin{aligned}
    C_x(b) & =
    \begin{cases}
      xb,     & 0\le b\le1, \\
      D_x(b), & b>1,
    \end{cases} \\
    D_x(b) & =
    \begin{cases}
      -x\log(1-b),   & 0\le b\le1-x, \\
      b-1+x-x\log x, & b\ge1-x.
    \end{cases}
  \end{aligned}
\end{equation*}
Hence, on \([0,1]\),
\begin{equation}
  h_{\lambda,x}(b)=
  \begin{cases}
    x\bigl[b+\lambda\log(1-b)\bigr],     & 0\le b\le1-x, \\
    xb-\lambda\bigl[b-1+x-x\log x\bigr], & 1-x\le b\le1.
  \end{cases}
  \label{eq:idealh}
\end{equation}
For \(b>1\), we have \(h_{\lambda,x}(b)=(1-\lambda)D_x(b)>0\).

With the seller distribution \(Q_x\) fixed, \(h_{\lambda,x}(b)\) now explicitly
describes the contribution of each buyer value \(b\) to
\(\mathrm{BO}-\lambda\mathrm{FB}\). We next choose a buyer distribution that
makes these contributions as small as possible, while also controlling the
seller-offering gains from trade \(\mathrm{SO}\). To meet both objectives, we
start with a baseline buyer distribution under which the seller's optimal
offers have a simple structure, then adjust its probability mass while
preserving this offer property.

\subsection{Seller offers and downward transport of buyer values}

Define a baseline buyer value \(\widetilde B\) with tail distribution
\begin{equation*}
  \overline F_{\widetilde B}(u):=\P(\widetilde B\ge u)=
  \begin{cases}
    1,         & u\le a,  \\
    \dfrac au, & a<u\le1, \\
    0,         & u>1.
  \end{cases}
\end{equation*}
where \(0<a<1\) is a parameter.
This distribution is supported on \([a,1]\) and has an atom of mass \(a\) at \(1\).

\begin{observation}
  Under the baseline buyer distribution \(\widetilde B\), every seller with
  cost \(s\in[0,1-x]\) offers price \(1\).
\end{observation}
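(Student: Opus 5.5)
We compute the seller's profit function under the baseline tail \(\overline F_{\widetilde B}\) and maximize it over \(p\ge s\) directly. Fix \(s\in[0,1-x]\). The profit from an offer \(p\) is \(\pi_s(p):=(p-s)\P(\widetilde B\ge p)\). We split the range of \(p\) into three pieces, as dictated by the definition of \(\overline F_{\widetilde B}\).

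For \(p>1\) the tail probability is \(0\), so \(\pi_s(p)=0\). For \(s\le p\le a\) (this range is empty if \(s>a\)) we have \(\pi_s(p)=p-s\le a-s\). For \(\max\{s,a\}<p\le1\) we have
\[
  \pi_s(p)=\frac{a(p-s)}{p}=a\Bigl(1-\frac sp\Bigr),
\]
which is nondecreasing in \(p\), and strictly increasing when \(s>0\). On this range the maximum is therefore attained at \(p=1\), with value \(a(1-s)\).

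Next we compare the pieces. We have \(a-s\le a(1-s)\) because \(a-s-a(1-s)=-s(1-a)\le0\). Hence \(p=1\) attains the supremum over all \(p\ge s\), and the maximal profit is \(a(1-s)\). This is positive because \(s\le1-x<1\), so the zero-profit convention does not apply, and the highest-price tie-breaking rule selects the largest maximizer.

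The only delicate case is \(s=0\). There \(\pi_0(p)=a\) for every \(p\in[a,1]\), so all these prices are optimal, and the tie-breaking convention of choosing the highest price gives \(p_{\mathrm S}(0)=1\). For \(s>0\) the maximizer is already unique at \(1\), since the inequality \(a-s<a(1-s)\) is strict and \(a(1-s/p)\) is strictly increasing. The argument uses only \(s<1\), which the constraint \(s\le1-x\) guarantees. I expect no real obstacle beyond carefully handling the boundary point \(p=a\) and the degenerate case \(s\ge a\), where the first piece disappears.
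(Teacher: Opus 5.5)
Your proof is correct and follows essentially the same route as the paper, which derives this observation from the argument of Lemma~\ref{lem:preserveSellerPrice}: the profit of any offer \(s\le p<1\) is bounded by \(a(1-s/p)\le a(1-s)\), strictly when \(s>0\); offers above \(1\) earn zero; and the highest-price tie-breaking rule handles \(s=0\). The only difference is that the paper uses the single bound \(\P(B\ge p)\le a/p\), which also covers \(p\le a\), so it needs no separate piece \(s\le p\le a\), and this uniformity is what lets the same argument apply unchanged to every downward transport of \(\widetilde B\).
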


\begin{lemma}\label{lem:preserveSellerPrice}
  Suppose the buyer value \(B\) satisfies
  \[
    B\in[0,1],\qquad
    \P(B=1)=a,\qquad
    \P(B\ge p)\le \frac ap\quad(0<p<1),
  \]
  Then every seller with cost \(s\in[0,1-x]\) offers price \(1\), and the
  expected gains from trade of the seller-offering mechanism are
  \[
    \SO=aD_x(1).
  \]
\end{lemma}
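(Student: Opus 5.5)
Fix a seller cost \(s\in[0,1-x]\) and compare the seller's expected profit \((p-s)\P(B\ge p)\) at \(p=1\) with every other feasible price \(p\ge s\). At \(p=1\) the profit is exactly \((1-s)a\), since \(B\le1\) and \(\P(B=1)=a\). For \(p>1\) the profit is \(0\) because \(B\le1\). For \(s\le p<1\), the hypothesis \(\P(B\ge p)\le a/p\) gives
\[
  (p-s)\P(B\ge p)\le \frac{a(p-s)}{p}=a\Bigl(1-\frac sp\Bigr)\le a(1-s),
\]
using \(p\le1\) and \(s\ge0\). Hence \(p=1\) attains the maximum. (At \(p=s\), including \(p=s=0\), the profit is \(0\); since \(P(B\ge p)\le1\) the bound still holds if \(p\to0\).)

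Next I handle tie-breaking. When \(s<1-x\le 1\) the maximum profit \(a(1-s)\) is positive, and the convention selects the highest optimal price. Every optimal price lies in \([s,1]\), because prices above \(1\) yield zero profit, so the selected price is \(1\) exactly. Sellers with costs outside \([0,1-x]\) have probability zero under \(Q_x\). The boundary \(s=1-x\) still has positive profit \(ax>0\), so the same argument applies. This is the only delicate step: I must check that no price above \(1\) ties, which holds because \(a>0\) makes the profit at \(1\) strictly positive.

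Finally I compute \(\SO\). Every seller in the support offers \(1\), so trade occurs exactly when \(B\ge1\), that is, when \(B=1\), an event of probability \(a\) independent of \(S\). Therefore
\[
  \SO=\E\bigl[(B-S)\mathbf 1_{\{B=1\}}\bigr]=a\,\E[1-S]=a\,\E[(1-S)_+]=aD_x(1),
\]
using \(S\le1-x<1\) and definition~\eqref{eq:D}. The observation preceding the lemma is the special case \(B=\widetilde B\), where the tail bound holds with equality.
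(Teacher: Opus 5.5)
Your proof is correct and follows the paper's argument essentially step for step: profit $a(1-s)>0$ at price $1$, the bound $(p-s)\P(B\ge p)\le a(1-s/p)\le a(1-s)$ for $s\le p<1$, zero profit above $1$, and then $\SO=\P(B=1)\E[1-S]=aD_x(1)$ by independence. The only difference is cosmetic. The paper notes the inequality is strict for $s>0$, so the optimum is unique there, and uses the highest-price convention only at $s=0$; you invoke that convention for every $s$, which is equally valid for the statement as posed.
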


\begin{proof}
  Fix \(s\in[0,1-x]\). Offering price \(1\) yields profit
  \((1-s)\P(B\ge1)=a(1-s)>0\).
  For any \(0<p<1\) with \(p\ge s\),
  \[
    (p-s)\P(B\ge p)
    \le (p-s)\frac ap
    =a\left(1-\frac sp\right)
    \le a(1-s).
  \]
  The last inequality is strict when \(s>0\). Offers \(p\le s\) yield nonpositive
  profit, while offers \(p>1\) are never accepted and yield zero profit. Thus,
  when \(s>0\), the unique optimal offer is \(1\). When \(s=0\), price \(1\) is
  still optimal and is the highest optimal price, so our convention also selects
  \(1\).

  Trade therefore occurs exactly when \(B=1\), giving
  \[
    \SO=\E\bigl[(1-S)\mathbf{1}_{\{B=1\}}\bigr]
    =\P(B=1)\E[1-S]
    =aD_x(1).\qedhere
  \]
\end{proof}

Lemma~\ref{lem:preserveSellerPrice} shows that preserving the mass at $1$
without increasing the buyer's tail probabilities leaves $\SO$ unchanged. We
therefore consider \emph{downward transport} of the baseline distribution
$\widetilde B$: take a map
\[
  \phi:[0,1]\to[0,1],\qquad \phi(u)\le u,\qquad \phi(1)=1,
\]
and let $B=\phi(\widetilde B)$. That is, mass originally at $u<1$ may only be
moved to a value no greater than $u$, while the atom at $1$ is preserved.

Since $\phi(u)\le u$, for every $0<p<1$, $\P(B\ge p)\le \P(\widetilde B\ge
  p)=\min\{1,a/p\}\le a/p$, and $\P(B=1)=\P(\widetilde B=1)=a$.
Lemma~\ref{lem:preserveSellerPrice} thus still applies, and every such downward
transport preserves $\SO=aD_x(1)$.

Optimizing the random-offerer mechanism over these transports therefore
requires only minimizing $\BO-\lambda\FB=\E[h_{\lambda,x}(B)]$. We next choose
the optimal destination pointwise.

\subsection{Optimal downward transport}
For a baseline buyer value \(u<1\), the permitted destinations form \([0,u]\).
To minimize its contribution to \(\BO-\lambda\FB\), we should therefore choose
a minimizer of \(h_{\lambda,x}\) on this interval.

\begin{definition}
  For a seller distribution \(F\), whenever the following minima are attained,
  define
  \[
    H_{\lambda,F}(u)
    =
    \min_{0\le b\le u}h_{\lambda,F}(b),
    \qquad
    m_{\lambda,F}(u)
    =
    \min\argmin_{0\le b\le u}h_{\lambda,F}(b).
  \]
\end{definition}

\begin{definition}
  The buyer value \(B\) is obtained by transporting \(\widetilde B\) downward:
  each value strictly below \(1\) is moved to the lowest minimizer of
  \(h_{\lambda,x}\) at or below that value, while the mass at \(1\) is preserved.
  Specifically, define
  \[
    B=
    \begin{cases}
      m_{\lambda,x}(\widetilde B),
       & \widetilde B<1, \\[1mm]
      1,
       & \widetilde B=1.
    \end{cases}
  \]

  The construction immediately gives
  \[
    B\le \widetilde B\le1,
    \qquad
    \P(B=1)=\P(\widetilde B=1)=a.
  \]
\end{definition}

\subsection{Characterizing \texorpdfstring{$m_{\lambda,x}$}{m}}
Recall that on \([0,1]\),
\begin{equation*}
  h_{\lambda,x}(b)=
  \begin{cases}
    x\bigl[b+\lambda\log(1-b)\bigr],     & 0\le b\le1-x, \\
    xb-\lambda\bigl[b-1+x-x\log x\bigr], & 1-x\le b\le1.
  \end{cases}
\end{equation*}
For \(b>1\), we have \(h_{\lambda,x}(b)=(1-\lambda)D_x(b)>0\).

The function \(b+\lambda\log(1-b)\) has a unique zero \(\tau=\tau_\lambda\) in
\((0,1)\), satisfying
\begin{equation}
  \tau+\lambda\log(1-\tau)=0,\qquad 1-\lambda<\tau<1.
  \label{eq:tau}
\end{equation}
From now on, take $0<x<1-\tau$.
Then \(x<\lambda\).
The function \(h_{\lambda,x}\) is nonnegative on \([0,\tau]\) and strictly
decreasing on \([\tau,1]\), so
\begin{equation}
  H_{\lambda,x}(u)=
  \begin{cases}
    0,                & 0\le u\le\tau, \\
    h_{\lambda,x}(u), & \tau\le u\le1, \\
    h_{\lambda,x}(1), & u\ge1.
  \end{cases}
  \label{eq:idealH}
\end{equation}
Equivalently,
\begin{equation*}
  \begin{aligned}
    m_{\lambda,x}(u)=
    \begin{cases}
      0, & 0\le u\le\tau, \\
      u, & \tau<u<1,      \\
      1, & u\ge1.
    \end{cases}
  \end{aligned}
\end{equation*}

The optimal transport thus moves buyer values at most \(\tau\) to \(0\) and
leaves higher values unchanged. In particular, \(m_{\lambda,x}(1)=1\), so the
requirement that the endpoint atom be preserved agrees with pointwise
minimization. The entire buyer construction can therefore be written as
\(B=m_{\lambda,x}(\widetilde B)\).

Equivalently, take an auxiliary random variable \(U\) with density
\[
  f_U(u)=\frac{a}{u^2}\mathbf{1}_{\{u\ge a\}},
\]
Then \(\widetilde B=\min\{U,1\}\). Since \(m_{\lambda,x}(u)=1\) for all
\(u\ge1\), the same buyer construction can also be written as
\[
  B=m_{\lambda,x}(U).
\]

\subsection{Calculations for the simple distribution}
Take \(0<a\le\tau\). Since \(H_{\lambda,x}(u)=0\) for \(u\le\tau\) and
\(h_{\lambda,x}(B)=H_{\lambda,x}(U)\), the surplus gap \(\BO-\lambda\FB\) for
this instance satisfies
\begin{equation*}
  \dfrac{\BO -\lambda \FB}{a} = \frac{\E[h_{\lambda,x}(B)]}{a}= \int_0^\infty\frac{H_{\lambda,x}(u)}{u^2}\,du:=J_\lambda(x).
\end{equation*}
Direct calculation gives
\begin{equation}
  \begin{aligned}
    J_\lambda(x) & =xA_\lambda+
    \lambda\bigl[x\log x+(1-x)\log(1-x)\bigr].
  \end{aligned}
  \label{eq:Jclosed}
\end{equation}
where \(A_\lambda=\tau+(\lambda-1)\log\tau\).
Consequently,
\begin{equation*}
  \begin{aligned}
    \dfrac{\SO+\BO -\lambda \FB}{a} = D_x(1)+J_\lambda(x).
  \end{aligned}
\end{equation*}
Optimizing over $x$ gives the best $\lambda$ attainable by this single-layer
construction. In fact, this construction cannot yield a ratio $\RO/\FB$
strictly below $1/2$.

Although the single-layer construction is now fully analyzed, its constant is
not small enough. In the next section, we therefore seek a recursive
construction that accumulates the surplus gaps of individual layers.

\section{Recursive hard instances and the final constant}
\label{sec:construction}
Building on the simple construction in the previous section, we now construct
recursive hard instances.
The idea is to replace the atom of mass \(x\) at \(0\) in \(Q_x\) with an existing
seller distribution scaled down to the cost interval \([0,\varepsilon]\),
while retaining an outer hyperbolic structure similar to that of \(Q_x\).

This operation can be repeated recursively, embedding similar structures at
different scales.
Although each step scales the previous distribution down by a factor of
\(\varepsilon\), its contribution does not vanish: the corresponding buyer
construction has a density of the form \(u^{-2}\).
As \(\varepsilon\to0\), the previous layer's contribution is approximately
multiplied only by the embedded probability mass, and the new outer layer
contributes a single-layer term \(J_\lambda(x)\) depending only on its own
parameter.

We first define the recursive transformation and the corresponding buyer
distribution.
We then prove the main recurrence: adding an outer layer with parameter \(x\)
approximately preserves \(x\) times the previous surplus gap and adds the
single-layer contribution \(J_\lambda(x)\).
Repeated application reduces the analysis of the recursive construction to a geometric series
and, ultimately, to a finite-dimensional parameter optimization.

\subsection{Seller and buyer distributions}
\subsubsection{The recursive seller distribution family}
Suppose the seller distribution \(F\) is supported on \([0,z]\), where
\(z\le1\) is its upper support endpoint.
For \(x,\varepsilon\in(0,1)\), let \(\ell=\varepsilon z\),
\(z'=1-x(1-\ell)\), and define
\begin{equation*}
  (T_{x,\varepsilon}F)(s)=
  \begin{cases}
    xF(s/\varepsilon),      & s\le\ell,  \\
    \dfrac{x(1-\ell)}{1-s}, & \ell<s<z', \\
    1,                      & s\ge z'.
  \end{cases}
\end{equation*}

\begin{definition}[Recursive seller distribution family]\label{def:recursiveFamily}
  Let \(\mathcal F\) be the family of seller distributions obtained from
  \(\delta_1\) by finitely many transformations \(T_{x,\varepsilon}\):
  \[
    \mathcal F=
    \left\{T_{x_n,\varepsilon_n}\cdots T_{x_1,\varepsilon_1}\delta_1:
    n\ge0,\quad x_i,\varepsilon_i\in(0,1)\right\}.
  \]
  When \(n=0\), the expression denotes \(\delta_1\).
  Parameters may differ across layers, so \(T_{x,\varepsilon}F\in\mathcal F\)
  for every \(F\in\mathcal F\).
\end{definition}

We now specify the recursive construction used below.
Fix \(0<\lambda<1\), and let \(\tau=\tau_\lambda\) satisfy~\eqref{eq:tau}.
For \(0<q,r<1-\tau\), \(N\in\mathbb Z_{\ge0}\), and \(0<\varepsilon<1\), define
\begin{equation}
  \begin{aligned}
    X_{q,\varepsilon,0}   & =\delta_1,                              \\
    X_{q,\varepsilon,k}   & =T_{q,\varepsilon}X_{q,\varepsilon,k-1}
    \quad(1\le k\le N),                                             \\
    G_{q,r,N,\varepsilon} & =T_{r,\varepsilon}X_{q,\varepsilon,N}.
  \end{aligned}
  \label{eq:family}
\end{equation}

The distribution \(X_{q,\varepsilon,N}\) is the inner distribution obtained
by repeatedly using parameter \(q\).
The distribution \(G_{q,r,N,\varepsilon}\) adds an outer layer with parameter
\(r\) and is the seller distribution used in the hard instance.
Each transformation scales the lowest support point by \(\varepsilon\), so
the final instance has lowest support point
\[
  a(G_{q,r,N,\varepsilon})=\varepsilon^{N+1}>0.
\]

\subsubsection{The corresponding buyer distribution}
Having defined the recursive seller distribution family, we again seek to
minimize the surplus gap from buyer offers while controlling seller offers.
We construct the buyer distribution in the same way as in the previous
section.
For each \(F\in\mathcal F\), let \(a(F)>0\) be its lowest support point.
Independently of the seller, draw a random variable \(U_F\) with density
\begin{equation}
  f_{U_F}(u)=\frac{a(F)}{u^2}\mathbf{1}_{\{u\ge a(F)\}}.
  \label{eq:U}
\end{equation}
Recall that \(H_{\lambda,F}(u)=\min_{0\le b\le u}h_{\lambda,F}(b)\).
Define the buyer value corresponding to \(F\) by
\begin{equation}
  B_{\lambda,F}=m_{\lambda,F}(U_F),\qquad
  m_{\lambda,F}(u)=\min\argmin_{0\le b\le u}h_{\lambda,F}(b).
  \label{eq:buyer}
\end{equation}

When analyzing a fixed seller distribution, abbreviate \(a=a(F)\),
\(U=U_F\), and \(B=B_{\lambda,F}\), and omit the fixed subscript \(\lambda\)
from \(h_F,H_F\).

Since \(B\le U\), for every \(p>0\),
\begin{equation*}
  \P(B\ge p)\le\P(U\ge p)
  =\min\left\{1,\frac ap\right\}\le\frac ap.
\end{equation*}
Also, \(h_{\lambda,F}(B)=H_{\lambda,F}(U)\), so define
\begin{equation}
  L_\lambda(F):=\int_{a(F)}^\infty\frac{H_{\lambda,F}(u)}{u^2}\,du
  =\frac{\BO-\lambda\FB}{a(F)}.
  \label{eq:L}
\end{equation}
Here \(\BO,\FB\) refer to the seller distribution \(F\) paired with buyer
value \(B_{\lambda,F}\).
Appendix~\ref{subsec:formulas} will also establish
\begin{equation}
  H_{\lambda,F}(u)=0\quad(0\le u<a(F)),
  \label{eq:Htail}
\end{equation}
Thus, the lower limit in~\eqref{eq:L} can equivalently be written as \(0\).

\subsection{The recurrence and gains from trade}

We now study the recurrence described at the beginning of this section.
One transformation \(T_{x,\varepsilon}\) scales an existing distribution
down and embeds it in the low-cost interval with probability mass \(x\),
while adding a new hyperbolic layer outside it.
We show below that these two parts contribute approximately
\(xL_\lambda(F)\) and \(J_\lambda(x)\), respectively.

\begin{lemma}\label{prop:recursion}
  Fix \(0<\lambda<1\), and let \(\tau\) satisfy~\eqref{eq:tau}.
  If \(0<x<1-\tau\), then there exist constants \(C>0\) and
  \(\varepsilon_0>0\), depending only on \(\lambda,x\), such that for every
  \(F\in\mathcal F\) and \(0<\varepsilon<\varepsilon_0\),
  \begin{equation}
    \left|
    L_\lambda(T_{x,\varepsilon}F)
    -xL_\lambda(F)-J_\lambda(x)
    \right|
    \le C\varepsilon.
    \label{eq:recursion}
  \end{equation}
\end{lemma}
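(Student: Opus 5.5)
We will compute \(H_{G}\) for \(G=T_{x,\varepsilon}F\) explicitly on three ranges of \(u\) and compare it with \(H_F\) rescaled and with \(H_{\lambda,x}\). The basic tool is a set of formulas for \(D_G\), \(C_G\) and the buyer's optimal offer \(p_G(b)\) in terms of those for \(F\). Write \(\ell=\varepsilon z\le\varepsilon\) and \(z'=1-x(1-\ell)\).

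\emph{Inner range \(b\le\ell\).} Since \(G(s)=xF(s/\varepsilon)\) there, the profit \((b-p)G(p)\) equals \(x\varepsilon\,(b/\varepsilon-p/\varepsilon)F(p/\varepsilon)\). Hence \(p_G(b)=\varepsilon p_F(b/\varepsilon)\), and this range gives \(D_G(b)=x\varepsilon D_F(b/\varepsilon)\) and \(C_G(b)=x\varepsilon C_F(b/\varepsilon)\). This is provided the buyer does not prefer an offer above \(\ell\), which holds because the profit on the hyperbolic part has derivative \(x(1-\ell)(b-1)/(1-p)^2<0\). So \(h_G(b)=x\varepsilon h_F(b/\varepsilon)\) for \(b\le\ell\).

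\emph{Middle and outer range \(b\ge\ell\).} The same derivative computation as in Section~\ref{sec:ideal} shows that for \(b\le1\) the best offer in \((\ell,z']\) is \(\ell\). The buyer then compares offer \(\ell\), with profit \((b-\ell)x\), against the best inner offer, with profit \(x\sup_{p\le\ell}(b-p)F(p/\varepsilon)\). Since \(F(z)=1\), this supremum is at least \(b-\ell\), and the difference is \(O(\varepsilon)\) uniformly in \(F\) because all inner offers lie in \([0,\varepsilon]\). We also need \(D_G(b)=x\varepsilon D_F(z)+x(1-\ell)\bigl[-\log(1-b)+\log(1-\ell)\bigr]\) for \(\ell\le b\le z'\), and the analogous affine formula beyond \(z'\). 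Consequently \(C_G(b)=xb+O(\varepsilon)\) and \(D_G(b)=D_{x}(b)+O(\varepsilon)\) for \(b\in[\ell,1]\), uniformly in \(F\in\mathcal F\). For this we use only \(0\le D_F(z),C_F\le z\le1\) and smoothness of \(\log(1-\cdot)\) away from \(1\), which is guaranteed since \(z'\le1-x/2\). Therefore \(h_G=h_{\lambda,x}+O(\varepsilon)\) on \([\ell,1]\), and \(h_G=(1-\lambda)D_G>0\) for \(b>1\).

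\emph{Assembling \(L\).} We split \(L_\lambda(G)=\int_{a(G)}^\infty H_G(u)u^{-2}\,du\) with \(a(G)=\varepsilon a(F)\). For \(u\le\ell\), substituting \(u=\varepsilon v\) gives \(\int_{\varepsilon a(F)}^{\ell}x\varepsilon H_F(u/\varepsilon)u^{-2}\,du=x\int_{a(F)}^{z}H_F(v)v^{-2}\,dv\); this is where the \(u^{-2}\) density exactly cancels the scaling. We then need \(H_F\) to be constant beyond \(z\), equal to the value reached by \(u=z\), so that the remaining tail \(x\int_z^\infty H_F v^{-2}\,dv\) must be accounted for. We handle this by showing that for \(u\in[\ell,\tau]\), \(H_G(u)\) equals \(x\varepsilon\min_{[0,z]}h_F\) up to \(O(\varepsilon)\). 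The point is that \(h_{\lambda,x}\ge0\) on \([0,\tau]\) while the inner minimum is \(O(\varepsilon)\). The precise statement we aim for is \(\int_\ell^\infty H_G u^{-2}\,du=x\int_z^\infty H_F v^{-2}\,dv+J_\lambda(x)+O(\varepsilon)\). To get it, write \(H_G=\min\{x\varepsilon\min h_F,\ \min_{[\ell,u]}h_G\}\) and compare with \(H_{\lambda,x}\) from \eqref{eq:idealH}. The contribution \(x\varepsilon\cdot\min h_F\cdot\int_{\ell}^{\cdot}u^{-2}\,du\approx x\min h_F\cdot z/\ell\cdot(\ell/z)\) reproduces exactly the tail term of \(xL_\lambda(F)\). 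Beyond \(\tau\) the outer term \(h_{\lambda,x}\) dominates and produces \(J_\lambda(x)\). For \(u\ge1\), \(H_G\) is constant and equal to \(H_G(1)=h_{\lambda,x}(1)+O(\varepsilon)\).

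\emph{Main obstacle.} The hard part is making the errors uniform in \(F\) and in the recursion depth. On \([\ell,\tau]\) the perturbation \(O(\varepsilon)\) of \(h_G\) is integrated against \(u^{-2}\) near \(u=\ell\sim\varepsilon\), which could produce an \(O(1)\) error. We will avoid this by showing that on \([\ell,c]\), for a fixed \(c\) depending on \(\lambda,x\), the outer candidate satisfies \(h_G(b)\ge x\varepsilon\min h_F\) exactly, not merely up to \(O(\varepsilon)\). This uses \(b+\lambda\log(1-b)\ge(1-\lambda)b\) for small \(b\), together with the bound \(|\min h_F|\le z\). On that range \(H_G\) is then exactly the inner constant, so no error is incurred there; on \([c,\infty)\) the weight \(u^{-2}\) is bounded and the \(O(\varepsilon)\) pointwise error integrates to \(O(\varepsilon)\). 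Choosing \(\varepsilon_0\) small enough for these monotonicity comparisons gives constants \(C,\varepsilon_0\) depending only on \(\lambda,x\).
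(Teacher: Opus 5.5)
\textbf{Gap in the window $u\in[\ell,\varepsilon]$.} Your plan breaks on $u\in[\ell,\varepsilon]$, i.e.\ at rescaled values $v=u/\varepsilon\in[z,1]$. You assert that $H_F$ is constant on $[z,\infty)$ and that $H_G$ equals the constant $x\varepsilon\min_{[0,z]}h_F=x\varepsilon H_F(z)$ on $[\ell,c]$. Both claims are false as soon as $F\neq\delta_1$, which is exactly the case the recursion needs.

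Take $F=T_{x',\varepsilon'}F'$ with $x'<\lambda$, as in the construction. On $[z,1]$ the buyer keeps offering $\ell'$, so $C_F$ grows with slope $x'$, while $D_F$ grows with slope $1$. Hence $h_F$ has slope exactly $x'-\lambda<0$ there. In fact $H_F$ is constant only on $[1,\infty)$: the invariant $(1-s)F(s)\le 1-z$ makes $z$ the optimal offer for $b>1$, so $h_F=(1-\lambda)D_F\ge0$ there.

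The same picture holds one level up. For $b\in[\ell,\varepsilon]$ you have $b/\varepsilon\le 1$, so the buyer facing $G$ still behaves as in the scaled copy of $F$: $p_G(b)=\varepsilon p_F(b/\varepsilon)$ and $h_G(b)=x\varepsilon h_F(b/\varepsilon)-\lambda\Delta(b)$ with $0\le\Delta(b)\le xb^2$. Thus $H_G$ decreases on $[\ell,\varepsilon]$ from about $x\varepsilon H_F(z)$ to about $x\varepsilon H_F(1)$. Your claimed inequality $h_G\ge x\varepsilon\min_{[0,z]}h_F$ fails near $b=\varepsilon$.

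Your accounting therefore produces $xH_F(z)/z$ where $x\int_z^\infty H_F(v)v^{-2}\,dv$ is needed. The two differ by $x\int_z^1\bigl(H_F(v)-H_F(z)\bigr)v^{-2}\,dv+x\bigl(H_F(1)-H_F(z)\bigr)$, which is strictly negative and does not vanish as $\varepsilon\to0$. Replacing $h_G$ by $h_{\lambda,x}+O(\varepsilon)$ on this window cannot repair it: as you note yourself, an $O(\varepsilon)$ error integrated against $u^{-2}$ near $u\sim\varepsilon$ is $O(1)$.

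\textbf{How to fix it.} Split at $\varepsilon$ rather than at $\ell$.
\begin{itemize}
\item On $[0,\varepsilon]$, use the exact identity above. It gives $H_G(u)=x\varepsilon H_F(u/\varepsilon)+O(u^2)$, and the $O(u^2)$ error is harmless against $u^{-2}$. The integral is $x\int_0^1 H_F(t)t^{-2}\,dt+O(\varepsilon)$.
\item On $(\varepsilon,\tau]$, you need $h_G\ge 0$ \emph{exactly}. Then $H_G\equiv H_G(\varepsilon)=x\varepsilon H_F(1)+O(\varepsilon^2)$ there, and the integral is $xH_F(1)+O(\varepsilon)=x\int_1^\infty H_F(t)t^{-2}\,dt+O(\varepsilon)$.
\item On $[\tau,\infty)$, your uniform $O(\varepsilon)$ comparison with $H_{\lambda,x}$ suffices.
\end{itemize}
Your $O(\varepsilon)$-level estimate of the buyer's inner offer is too weak to give the exactness required on $(\varepsilon,\tau]$. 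You need that for $b>\varepsilon$ the inner offer is exactly $\ell$. This again comes from the invariant $(1-s)F(s)\le1-z$ valid on $\mathcal F$; it fails for general $F$ supported on $[0,z]$. With it, $h_G(b)=x\varepsilon(1-\lambda)D_F(z)+(1-\ell)h_{\lambda,x}(k(b))\ge 0$.

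A minor point: the inequality runs $b+\lambda\log(1-b)\le(1-\lambda)b$, not $\ge$. What you actually need is $h_{\lambda,x}\ge 0$ on $[0,\tau]$.
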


\begin{proofsketch}
  Write \(G=T_{x,\varepsilon}F\), and temporarily omit the fixed subscript
  \(\lambda\) from \(H\).

  On \(0\le b\le\varepsilon\), the inner seller distribution \(F\) has its
  costs scaled by \(\varepsilon\) and retains total probability mass \(x\).
  Thus, at buyer value \(b\),
  \[
    h_G(b)
    =
    x\varepsilon h_F(b/\varepsilon)+O(b^2),
    \qquad 0\le b\le\varepsilon,
  \]

  Since $H_G(u)=\min_{0\le b\le u}h_G(b)$, taking the minimum over
  \(b\in[0,u]\) gives
  \[
    H_G(u)
    =
    x\varepsilon H_F(u/\varepsilon)+O(u^2),
    \qquad 0\le u\le\varepsilon.
  \]
  In particular, setting \(u=\varepsilon t\) gives, for \(0\le t\le1\),
  \[
    H_G(\varepsilon t)
    =
    x\varepsilon H_F(t)+O(\varepsilon^2t^2).
  \]

  Consider the contribution of \([0,\varepsilon]\) to \(L_\lambda(G)\).
  With \(u=\varepsilon t\),
  \[
    \begin{aligned}
      \int_0^\varepsilon
      \frac{H_G(u)}{u^2}\,du
       & =
      \int_0^1
      \frac{x\varepsilon H_F(t)+O(\varepsilon^2t^2)}
      {\varepsilon^2t^2}\,
      \varepsilon\,dt \\
       & =
      x\int_0^1\frac{H_F(t)}{t^2}\,dt
      +O(\varepsilon).
    \end{aligned}
  \]
  This calculation shows why the contribution survives the scaling.
  Scaling contributes a factor of \(\varepsilon\) to \(H_G\), while the
  change of variables contributes \(\varepsilon^{-2}\) from \(u^{-2}\) and
  another \(\varepsilon\) from the integration element.
  These three factors cancel, leaving only the probability mass \(x\).
  The inner construction's contribution therefore remains of constant order
  even at arbitrarily small scales, rather than vanishing with \(\varepsilon\).

  More precisely, the calculation so far gives only
  \[
    x\int_0^1\frac{H_F(t)}{t^2}\,dt,
  \]
  The missing part, $x\int_1^\infty\frac{H_F(t)}{t^2}\,dt$, is exactly
  recovered on the next interval
  \([\varepsilon,\tau]\), while the contribution from \([\tau,\infty)\)
  approaches the simple distribution's contribution \(J_\lambda(x)\).
  Appendix~\ref{subsec:lemmaproofs} gives uniform error estimates on these
  three intervals. Combining them yields
  \[
    L_\lambda(T_{x,\varepsilon}F)
    =
    xL_\lambda(F)+J_\lambda(x)+O(\varepsilon).
  \]
  The two terms in the recurrence thus have direct interpretations:
  \(xL_\lambda(F)\) is the contribution retained from the old distribution
  after scaling and embedding it at the lower end, and \(J_\lambda(x)\)
  is the contribution of the newly added outer hyperbolic layer.
\end{proofsketch}

Applying the preceding lemma recursively gives the following property of
the seller distribution $G_{q,r,N,\varepsilon}$ used in our construction.
\begin{lemma}\label{lem:unroll}
  Fix \(0<\lambda<1\), let \(\tau\) satisfy~\eqref{eq:tau}, and take
  \(0<q,r<1-\tau\).
  There exists \(\varepsilon_0>0\), depending only on \(\lambda,q,r\),
  such that for every \(N\in\mathbb Z_{\ge0}\) and
  \(0<\varepsilon<\varepsilon_0\),
  \begin{equation}
    L_\lambda(G_{q,r,N,\varepsilon})
    =r\frac{1-q^N}{1-q}J_\lambda(q)+J_\lambda(r)+O(\varepsilon).
    \label{eq:unroll}
  \end{equation}
  The error constant depends only on \(\lambda,q,r\), independently of
  \(N\) and \(\varepsilon\).
\end{lemma}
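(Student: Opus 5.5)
Unroll Lemma~\ref{prop:recursion} along the chain~\eqref{eq:family}, using that its constants \(C,\varepsilon_0\) depend only on \(\lambda\) and the layer parameter \(x\), not on the inner distribution \(F\in\mathcal F\). Let \(C_q,\varepsilon_q\) be the constants for \(x=q\) and \(C_r,\varepsilon_r\) those for \(x=r\); both \(q,r\) lie in \((0,1-\tau)\), so the lemma applies. Set \(\varepsilon_0=\min\{\varepsilon_q,\varepsilon_r\}\). Every \(X_{q,\varepsilon,k}\) lies in \(\mathcal F\) by Definition~\ref{def:recursiveFamily}, so the lemma may be applied at each step.

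First compute the base value \(L_\lambda(\delta_1)\). For \(F=\delta_1\) we have \(a(F)=1\), \(F(s)=\mathbf 1_{\{s\ge1\}}\), and \(D_F(b)=(b-1)_+\). The buyer offer is \(p_F(b)=1\) for \(b>1\) (and \(0\) otherwise, giving zero gains), so \(C_F(b)=D_F(b)\). Hence
\[
  h_F(b)=(1-\lambda)(b-1)_+\ge0,\qquad h_F(0)=0,
\]
which gives \(H_F\equiv0\) and \(L_\lambda(\delta_1)=0\).

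Next define \(\ell_k=L_\lambda(X_{q,\varepsilon,k})\). Lemma~\ref{prop:recursion} gives
\[
  \ell_k=q\ell_{k-1}+J_\lambda(q)+e_k,\qquad |e_k|\le C_q\varepsilon .
\]
Induction from \(\ell_0=0\) yields
\[
  \ell_N=\frac{1-q^N}{1-q}J_\lambda(q)+\sum_{k=1}^N q^{N-k}e_k .
\]
The error sum is bounded by \(C_q\varepsilon\sum_{j\ge0}q^j=C_q\varepsilon/(1-q)\), uniformly in \(N\). This is the one point needing care: the contraction factor \(q<1\) is what keeps the accumulated error from growing linearly in \(N\), and this relies on the error constant in Lemma~\ref{prop:recursion} being independent of \(F\) and of the depth.

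Finally apply Lemma~\ref{prop:recursion} once more with \(x=r\) and \(F=X_{q,\varepsilon,N}\):
\[
  L_\lambda(G_{q,r,N,\varepsilon})=r\ell_N+J_\lambda(r)+e',\qquad |e'|\le C_r\varepsilon .
\]
Substituting the expression for \(\ell_N\) gives~\eqref{eq:unroll}, with total error at most \(\bigl(rC_q/(1-q)+C_r\bigr)\varepsilon\). This constant depends only on \(\lambda,q,r\), as claimed.

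There is no real obstacle beyond the two checks above. The first is the base case \(L_\lambda(\delta_1)=0\), which requires \(h_{\delta_1}\ge0\) on \([0,\infty)\). The second is confirming that the chain stays in \(\mathcal F\) with parameters in \((0,1)\), so that the uniformity in Lemma~\ref{prop:recursion} applies at every step.
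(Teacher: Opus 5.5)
Your proposal is correct and matches the paper's proof essentially step for step. Both use the base case \(L_\lambda(\delta_1)=0\), the perturbed recurrence with \(\varepsilon_0=\min\{\varepsilon_{0,q},\varepsilon_{0,r}\}\), unrolling with geometrically weighted errors, and one final application at \(x=r\), reaching the same bound \(\bigl(rC_q/(1-q)+C_r\bigr)\varepsilon\); your explicit check that \(h_{\delta_1}(b)=(1-\lambda)(b-1)_+\ge0\) just spells out the paper's one-line claim \(H_{\lambda,\delta_1}\equiv0\).
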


\begin{proofsketch}
  Apply~\eqref{eq:recursion} repeatedly.
  The errors at successive layers are weighted by powers of \(q\), and their
  sum is bounded by a convergent geometric series.
  The total error is therefore \(O(\varepsilon)\), with an implicit constant
  independent of the number of iterations \(N\).
\end{proofsketch}

The recurrence above describes the contribution
\(\mathrm{BO}-\lambda\mathrm{FB}\) in the recursive construction.
To obtain the full surplus gap
\(\mathrm{SO}+\mathrm{BO}-\lambda\mathrm{FB}\), we still need to analyze
the seller-offering gains from trade.
We show next that, under the final distribution
\(G_{q,r,N,\varepsilon}\), sellers continue to offer price \(1\).

\begin{lemma}\label{lem:bo}
  For \(G=G_{q,r,N,\varepsilon}\in\mathcal F\), take the corresponding buyer
  value \(B_{\lambda,G}\), and let \(a=a(G)\).
  There exists \(\varepsilon_0>0\), depending only on \(\lambda,r\), such
  that for all \(0<\varepsilon<\varepsilon_0\), every seller uniquely
  optimally offers price \(1\), and
  \begin{equation}
    \frac{\SO}{a}=D_G(1)=r(1-\log r)+O(\varepsilon).
    \label{eq:SOannounced}
  \end{equation}
\end{lemma}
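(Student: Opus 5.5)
The plan is to prove the statement in three steps: first show \(m_{\lambda,G}(u)=1\) for all \(u\ge1\); then run the argument of Lemma~\ref{lem:preserveSellerPrice} to get that every seller uniquely offers \(1\); finally compute \(D_G(1)\) directly.

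\textbf{Step 1: the buyer atom at \(1\).} Write \(G=T_{r,\varepsilon}F\) with \(F=X_{q,\varepsilon,N}\) supported on \([0,z]\), \(\ell=\varepsilon z\le\varepsilon\), and \(z'=1-r(1-\ell)\). The goal is \(\P(B=1)=\P(U\ge1)=a\), and for that it suffices to show that \(h_G\) has the unique minimizer \(b=1\) on \([0,1]\) and that \(h_G(1)<0\). Indeed, for \(b>1\) the buyer can offer \(z'\) and trade surely, so \(C_G=D_G\) and \(h_G(b)=(1-\lambda)D_G(b)>0\). Then \(m_G(u)=1\) for every \(u\ge1\).

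The approach is to compare \(h_G\) with the single-layer function \(h_{\lambda,r}\) of~\eqref{eq:idealh}.

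\emph{Buyer offers.} For \(b\le1\) and \(p\in(\ell,z')\), the profit \((b-p)r(1-\ell)/(1-p)\) is nonincreasing in \(p\), as in Section~\ref{sec:ideal}. Hence \(p_G(b)\le\ell\), so \(C_G(b)\) agrees with \(rb\) up to \(O(\varepsilon)\).

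\emph{First-best term.} \(D_G\) and \(D_{Q_r}\) differ by \(O(\varepsilon)\) uniformly on \([0,1]\). The reason is that \(G\) and \(Q_r\) coincide up to the factor \(1-\ell\) and a reshuffling of mass \(r\) inside \([0,\ell]\).

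Together these give \(\sup_{[0,1]}|h_G-h_{\lambda,r}|\le C\varepsilon\), with \(C\) depending only on \(\lambda,r\) and not on \(F\) or \(N\). This uniformity is essential because \(F=X_{q,\varepsilon,N}\) is arbitrary.

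By~\eqref{eq:idealH}, \(h_{\lambda,r}\ge0\) on \([0,\tau]\), \(h_{\lambda,r}\) is strictly decreasing on \([\tau,1]\), and \(h_{\lambda,r}(1)=r-\lambda r(1-\log r)<0\) (since \(r<1-\tau<\lambda\)). Hence for small \(\varepsilon\) any minimizer of \(h_G\) on \([0,1]\) lies in a small neighborhood of \(1\), which we may take inside \([z',1]\).

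\textbf{Main obstacle: the exact minimizer near \(1\).} Uniform approximation alone cannot force the minimizer to be exactly \(1\), so a local monotonicity argument is needed on \([z',1]\). There \(D_G'(b)=G(b)=1\). By the envelope form of~\eqref{eq:C}, \(C_G'(b)=G(p_G(b))\le G(\ell)=r\), because \(p_G(b)\le\ell\). Therefore \(h_G'(b)\le r-\lambda<0\), so \(h_G\) is strictly decreasing on \([z',1]\) and \(b=1\) is the unique minimizer. (Where \(C_G\) is only one-sided differentiable, I would use the same bound on the right derivative.)

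\textbf{Step 2: seller offers.} Now \(\P(B=1)=a\) and, as shown after~\eqref{eq:buyer}, \(\P(B\ge p)\le a/p\). Every seller cost satisfies \(s\ge a(G)=\varepsilon^{N+1}>0\). The computation in the proof of Lemma~\ref{lem:preserveSellerPrice} gives, for \(p\in(s,1)\),
\[
(p-s)\P(B\ge p)\le a\Bigl(1-\frac sp\Bigr)<a(1-s),
\]
while offering \(1\) earns \(a(1-s)>0\), and offers \(p\le s\) or \(p>1\) earn nothing. So price \(1\) is the unique optimal offer, trade occurs exactly when \(B=1\), and \(\SO=a\,\E[1-S]=aD_G(1)\).

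\textbf{Step 3: computing \(D_G(1)\).} Split \(D_G(1)=\int_0^1G\) over \([0,\ell]\), \([\ell,z']\) and \([z',1]\).
\begin{itemize}
\item The first piece is at most \(\ell\le\varepsilon\).
\item The second piece is \(r(1-\ell)\log\frac{1-\ell}{1-z'}=r(1-\ell)\log\frac1r\).
\item The third piece is \(1-z'=r(1-\ell)\).
\end{itemize}
Hence \(D_G(1)=r(1-\ell)(1-\log r)+O(\varepsilon)=r(1-\log r)+O(\varepsilon)\), which is~\eqref{eq:SOannounced}. All constants depend only on \(\lambda\) and \(r\).
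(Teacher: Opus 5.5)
Your Steps 2 and 3 are correct and essentially match the paper: the seller-offer bound \(a(1-s/p)<a(1-s)\), and a direct evaluation of \(D_G(1)\) that is equivalent to using \eqref{eq:DG}. Your global comparison \(\sup_{[0,1]}|h_G-h_{\lambda,r}|\le C\varepsilon\) is also fine, and it correctly confines the competition for the minimum to \([z',1]\).

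The gap is in the step you yourself call the main obstacle. The formula \(C_G'(b)=G(p_G(b))\) is not an envelope identity. The envelope theorem differentiates only the buyer's profit \(\pi_G(b)=(b-p_G(b))G(p_G(b))\). By \eqref{eq:C}, however, \(C_G(b)=D_G(p_G(b))+\pi_G(b)\) also contains the seller's utility \(D_G(p_G(b))\). This term changes whenever \(p_G\) changes, and it jumps upward when \(p_G\) jumps. You only establish \(p_G(b)\le\ell\). With that information alone, this term can differ by up to \(D_G(\ell)=r\varepsilon D_F(z)\) between nearby values of \(b\), which can swamp the gain \((\lambda-r)(1-b)\) for \(b\) close to \(1\). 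So neither strict decrease of \(h_G\) on \([z',1]\) nor \(h_G(b)>h_G(1)\) for \(b\) just below \(1\) follows, and that is exactly what you need for \(m_{\lambda,G}(u)=1\).

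The missing fact is that \(p_G(b)=\ell\) exactly for every \(\varepsilon<b\le1\).
\begin{itemize}
\item Write inner offers as \(p=\varepsilon t\). The buyer then maximizes \(r\varepsilon(b/\varepsilon-t)F(t)\) with \(b/\varepsilon>1\), and Lemma~\ref{lem:invariant-boundary} makes \(t=z\) the unique maximizer.
\item Your monotonicity observation rules out outer offers; together these give Proposition~\ref{prop:transformed-price}.
\item The same lemma is what justifies that a buyer with \(b>1\) actually offers \(z'\), not merely that it could.
\end{itemize}
With \(p_G\equiv\ell\), \(C_G(b)=D_G(\ell)+r(b-\ell)\) is affine and \(h_G'=r-\lambda<0\) on \([z',1]\), so your proof closes.

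At that point, though, the uniform approximation is redundant on \((\varepsilon,1]\). There \eqref{eq:hG} gives exactly \(h_G(b)=r\varepsilon(1-\lambda)D_F(z)+(1-\ell)h_{\lambda,r}(k(b))\). Since \(k\) is increasing with \(k(1)=1\) and \(1\) is the unique minimizer of \(h_{\lambda,r}\) on \([0,1]\), this gives \(h_G(b)>h_G(1)\) there directly. This is the paper's route. It then only needs the crude bound \(h_G(b)\ge-\lambda b\ge-\lambda\varepsilon\) on \([0,\varepsilon]\), compared against \(h_G(1)=h_{\lambda,r}(1)+O(\varepsilon)<0\).

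A small point: \(h_{\lambda,r}(1)<0\) follows from \(h_{\lambda,r}(\tau)=0\) and strict decrease on \([\tau,1]\), which uses \(r<1-\tau\). The inequality \(r<\lambda\) alone would not suffice, since \(h_{\lambda,r}(1)>0\) for \(r\) slightly below \(\lambda\).
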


The proofs of Lemmas~\ref{prop:recursion}, \ref{lem:unroll}, and~\ref{lem:bo}
appear in Appendices~\ref{subsec:lemmaproofs}, \ref{subsec:unrollproof},
and~\ref{subsec:SOproof}, respectively.
Combining these lemmas gives
\begin{equation*}
  \frac{\SO+\BO-\lambda\FB}{a}
    =r(1-\log r)+J_\lambda(r)+r\frac{1-q^N}{1-q}J_\lambda(q)+O(\varepsilon).
\end{equation*}
As \(N\to\infty\), denote the limit of the main term on the right-hand side by
\begin{equation}
  \Phi_\lambda(q,r)=r(1-\log r)+J_\lambda(r)+\frac{r}{1-q}J_\lambda(q).
  \label{eq:Phi}
\end{equation}

\subsection{Parameter choice and the main theorem}\label{subsec:parameters}

Within the construction in~\eqref{eq:Phi}, we seek the smallest \(\lambda\)
for which \(\Phi_\lambda(q,r)\le0\).
Since \(r>0\),
\begin{equation}
  \frac{\Phi_\lambda(q,r)}r
  =\frac{J_\lambda(q)}{1-q}+1-\log r+\frac{J_\lambda(r)}r.
  \label{eq:separate}
\end{equation}
Equation~\eqref{eq:separate} allows the two parameters to be optimized
separately. By~\eqref{eq:Jclosed},
\begin{equation*}
  \begin{aligned}
    \frac{d}{dq}\frac{J_\lambda(q)}{1-q}
     & =\frac{A_\lambda+\lambda\log q}{(1-q)^2}, \\
    \frac{d}{dr}\left(1-\log r+\frac{J_\lambda(r)}r\right)
     & =-\frac{r+\lambda\log(1-r)}{r^2}.
  \end{aligned}
\end{equation*}
Viewing the closed-form expressions as functions on \((0,1)\), the unique
minimizer of the first term satisfies
\begin{equation}
  A_\lambda+\lambda\log q=0,
  \label{eq:qminimum}
\end{equation}
and the unique minimizer of the second term is
\begin{equation}
  r=\tau_\lambda.
  \label{eq:rminimum}
\end{equation}
At these two points, respectively,
\[
  \frac{J_\lambda(q)}{1-q}=\lambda\log(1-q),\qquad
  1-\log\tau+\frac{J_\lambda(\tau)}\tau=2A_\lambda.
\]
Setting the minimum of~\eqref{eq:separate} equal to zero gives
\begin{equation}
  0=2A_\lambda+\lambda\log(1-q)
  =\lambda\log\frac{1-q}{q^2}.
  \label{eq:criticalzero}
\end{equation}
Thus, \(q^2+q=1\).
Combining~\eqref{eq:qminimum}, \eqref{eq:rminimum}, and~\eqref{eq:criticalzero}
with the defining equation~\eqref{eq:tau} for \(\tau\), the critical
parameters are defined by the system
\begin{equation}
  \begin{gathered}
      q_*=\frac{\sqrt5-1}{2},\qquad r_*=\tau_*,\\
      \tau_*+\lambda_*\log(1-\tau_*)=0,\\
      \tau_*+(\lambda_*-1)\log\tau_*+\lambda_*\log q_*=0.
    \end{gathered}
  \label{eq:parameters}
\end{equation}
Table~\ref{tab:parameters} lists the parameter values satisfying these conditions.
\begin{table}[htbp]
  \centering
  \small
  \begin{tabular}{@{}c c@{}}
    \toprule
    Parameter & Value \\
    \midrule
    \(q_*\)       & \(0.618033988\ldots\) \\
    \(r_*\)       & \(0.241149237\ldots\) \\
    \(\lambda_*\) & \(0.873886976\ldots\) \\
    \bottomrule
  \end{tabular}
  \caption{Optimal parameters}
  \label{tab:parameters}
\end{table}

Since \(\tau_*<1/4\) and \(q_*<3/4\), we have \(q_*<1-\tau_*\) and
\(r_*=\tau_*<1-\tau_*\).
These points therefore satisfy the original constraints, so the preceding
analysis applies to the distributions constructed with these parameters, and
\begin{equation}
  \Phi_{\lambda_*}(q_*,r_*)=0.
  \label{eq:Phizero}
\end{equation}

Write
\begin{equation*}
  \rho_*:=\frac{\lambda_*}{2}\approx0.436943488.
\end{equation*}

With the construction parameters determined, we can now prove the main result.

\begin{theorem}\label{thm:main}
  Take \(q_*,r_*,\lambda_*\) as defined by~\eqref{eq:parameters}.
  Let the seller distribution be \(G=G_{q_*,r_*,N,\varepsilon}\in\mathcal F\)
  from~\eqref{eq:family}, and let the independent buyer value
  \(B_{\lambda_*,G}\) be defined by~\eqref{eq:U}--\eqref{eq:buyer}, with
  \(a=\varepsilon^{N+1}\) and \(\lambda=\lambda_*\).
  For sufficiently small positive \(\varepsilon\), we have \(\FB>0\) and
  \begin{equation*}
    \frac{\RO}{\FB}=\rho_*+O(\varepsilon+q_*^N),
  \end{equation*}
  The error constant is independent of \(N\) and \(\varepsilon\).
  Thus, for every \(\eta>0\), there exist a finite \(N\) and a positive
  \(\varepsilon\) such that
  \[
    \left|\frac{\RO}{\FB}-\rho_*\right|<\eta.
  \]
  In particular, \(\rho_{\mathrm{RO}}\le\rho_*<0.436944\).
\end{theorem}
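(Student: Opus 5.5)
The plan is to apply the identity
\[
  \frac{\RO}{\FB}-\frac{\lambda_*}{2}=\frac{\SO+\BO-\lambda_*\FB}{2\FB}
\]
from the model section. The numerator will be controlled by Lemmas~\ref{lem:unroll} and~\ref{lem:bo}, and the denominator will be bounded below by a positive constant times \(a=\varepsilon^{N+1}\).

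First, the numerator. By~\eqref{eq:L}, \(\BO-\lambda_*\FB=aL_{\lambda_*}(G)\). Lemma~\ref{lem:unroll} with \(q=q_*\), \(r=r_*\) gives \(L_{\lambda_*}(G)=r_*\frac{1-q_*^N}{1-q_*}J_{\lambda_*}(q_*)+J_{\lambda_*}(r_*)+O(\varepsilon)\), uniformly in \(N\). Lemma~\ref{lem:bo} gives \(\SO/a=r_*(1-\log r_*)+O(\varepsilon)\); it needs \(\varepsilon\) small depending only on \(\lambda_*,r_*\). Adding these and comparing with~\eqref{eq:Phi} yields
\[
  \frac{\SO+\BO-\lambda_*\FB}{a}=\Phi_{\lambda_*}(q_*,r_*)-\frac{r_*q_*^N}{1-q_*}J_{\lambda_*}(q_*)+O(\varepsilon).
\]
By~\eqref{eq:Phizero} this is \(O(q_*^N+\varepsilon)\), with constants depending only on the fixed starred parameters.

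Second, the denominator, which I expect to be the main obstacle. I need \(\FB/a\ge c>0\) uniformly in \(N\) and small \(\varepsilon\). Since \(h_G\le C_G\le D_G\) and \(C_G\ge0\), I get \(\lambda_*\FB\ge\BO-aL_{\lambda_*}(G)\). That only helps if I lower-bound \(\BO\), so a direct route is better. The buyer has an atom of mass \(a\) at \(1\): the equal-revenue tail of \(U\) above \(1\) is mapped to \(1\), because the minimizer of \(h_G\) on \([0,u]\) for \(u\ge1\) is at most \(1\) and equals \(1\) by the analogue of the single-layer analysis (equivalently, this is what makes Lemma~\ref{lem:bo} hold). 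Hence \(\FB\ge\P(B=1)\,D_G(1)=aD_G(1)\). Lemma~\ref{lem:bo} gives \(D_G(1)=r_*(1-\log r_*)+O(\varepsilon)\), which is at least \(r_*/2\) once \(\varepsilon\) is small. So \(\FB\ge ar_*/2>0\), which also proves \(\FB>0\). Finiteness of \(\FB\) is immediate because \(B\le1\).

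Dividing the numerator estimate by \(2\FB\ge ar_*\) gives \(\RO/\FB=\rho_*+O(\varepsilon+q_*^N)\) with a constant independent of \(N\) and \(\varepsilon\). For a given \(\eta>0\), first choose \(N\) with the \(q_*^N\) term below \(\eta/2\), then choose \(\varepsilon\) below all the \(\varepsilon_0\)'s and small enough that the \(\varepsilon\) term is below \(\eta/2\). This gives \(\rho_{\mathrm{RO}}\le\rho_*\). Finally, \(\rho_*<0.436944\) follows from the numerical solution of~\eqref{eq:parameters} in Table~\ref{tab:parameters}; the tie-breaking caveat is handled by the perturbation footnote.
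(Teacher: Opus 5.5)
Your proposal is correct and follows essentially the same route as the paper: you bound $\FB/a$ below via the atom $\P(B=1)=a$ together with $D_G(1)=r_*(1-\log r_*)+O(\varepsilon)$, you combine Lemmas~\ref{lem:unroll} and~\ref{lem:bo} with $\Phi_{\lambda_*}(q_*,r_*)=0$ to leave a numerator remainder of order $O(\varepsilon+q_*^N)$, and you choose $N$ before $\varepsilon$. The one step you state loosely is the atom at $1$; the paper takes it directly from~\eqref{eq:topatom}, established in the proof of Lemma~\ref{lem:bo}, rather than from an ``analogue'' of the single-layer analysis.
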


\begin{proof}
  Fix
  \[
    G=G_{q_*,r_*,N,\varepsilon},\qquad
    B=B_{\lambda_*,G},\qquad
    a=a(G)=\varepsilon^{N+1}.
  \]
  Throughout this proof, $\SO,\BO,\FB$ refer to this final instance.

  First, we give a positive lower bound on $\FB/a$ independent of
  $N,\varepsilon$.
  By~\eqref{eq:topatom}, $B\le 1$ and $\P(B=1)=a$.
  Keeping only the contribution of $\{B=1\}$ to the first-best gains from
  trade gives
  \[
    \FB=\E[(B-S)_+]
    \ge \P(B=1)\E[1-S]
    =aD_G(1).
  \]
  By~\eqref{eq:Dbound}, $D_G(1)=r_*(1-\log r_*)+O(\varepsilon)$.
  Since $r_*(1-\log r_*)>0$, for sufficiently small $\varepsilon$ there
  exists a constant $c>0$, depending only on $r_*$, such that
  \[
    \frac{\FB}{a}\ge D_G(1)\ge c.
  \]
  In particular, $\FB>0$.

  Next, we estimate $\SO+\BO-\lambda_*\FB$.
  Equations~\eqref{eq:unroll} and~\eqref{eq:SOexact} give
  \[
    \frac{\SO+\BO-\lambda_*\FB}{a}
    =
    r_*(1-\log r_*)+J_{\lambda_*}(r_*)
    +r_*\frac{1-(q_*)^N}{1-q_*}J_{\lambda_*}(q_*)
    +O(\varepsilon).
  \]
  Equation~\eqref{eq:Phizero} gives $\Phi_{\lambda_*}(q_*,r_*)=0$, that is,
  \[
    r_*(1-\log r_*)+J_{\lambda_*}(r_*)
    +\frac{r_*}{1-q_*}J_{\lambda_*}(q_*)=0.
  \]
  Subtracting yields
  \[
    \frac{\SO+\BO-\lambda_*\FB}{a}
    =
    -\frac{r_*(q_*)^N}{1-q_*}J_{\lambda_*}(q_*)
    +O(\varepsilon)
    =
    O\!\left(\varepsilon+(q_*)^N\right),
  \]
  with a constant independent of $N$ and $\varepsilon$.

  Finally, since $\rho_*=\lambda_*/2$ and $\RO=(\SO+\BO)/2$,
  \[
    \frac{\RO}{\FB}-\rho_*
    =
    \frac{(\SO+\BO-\lambda_*\FB)/a}{2\,\FB/a}.
  \]
  Combining the numerator estimate with $\FB/a\ge c>0$ gives
  \[
    \frac{\RO}{\FB}
    =
    \rho_*
    +O\!\left(\varepsilon+(q_*)^N\right).
  \]

  Hence, for any $\eta>0$, first choose a sufficiently large finite $N$
  so that $(q_*)^N$ is small enough, and then choose a sufficiently small
  positive $\varepsilon$ to obtain
  \[
    \left|\frac{\RO}{\FB}-\rho_*\right|<\eta.
  \]
  Therefore $\rho_{\mathrm{RO}}\le \rho_*<0.436944$.
\end{proof}

\section*{AI Disclosure}

Generative AI tools, including OpenAI Codex and GPT, played a substantive role
in the mathematical exploration underlying this work. Codex was first used to
search numerically for improved constructions and was subsequently prompted to
infer analytic functional forms from the solutions it found. From the resulting
candidate forms, the authors identified the recurring structural pattern
underlying the construction developed in this paper. GPT was then used to
assist in developing rigorous proofs of the resulting statements. With AI
assistance, the authors developed the construction and its analysis, selected
and reorganized the essential arguments into the proof structure presented
here, and improved their mathematical interpretation and exposition. The
authors checked the analytic derivations and proofs and take full
responsibility for the correctness and originality of the paper.

\begingroup
\raggedright
\bibliographystyle{unsrt}
\bibliography{references}

@article{MS83,
  author = {Roger B. Myerson and Mark A. Satterthwaite},
  title = {{Efficient Mechanisms for Bilateral Trading}},
  journal = {Journal of Economic Theory},
  year = {1983},
  volume = {29},
  number = {2},
  pages = {265--281},
  doi = {10.1016/0022-0531(83)90048-0},
  url = {https://doi.org/10.1016/0022-0531(83)90048-0}
}

@inproceedings{BCWZ17,
  author = {Johannes Br\"ustle and Yang Cai and Fa Wu and Mingfei Zhao},
  title = {{Approximating Gains from Trade in Two-sided Markets via Simple Mechanisms}},
  booktitle = {Proceedings of the 18th ACM Conference on Economics and Computation},
  year = {2017},
  pages = {589--590},
  publisher = {Association for Computing Machinery},
  doi = {10.1145/3033274.3085148},
  eprint = {1706.04637},
  archivePrefix = {arXiv},
  url = {https://doi.org/10.1145/3033274.3085148}
}

@inproceedings{CGMZ21,
  author = {Yang Cai and Kira Goldner and Steven Ma and Mingfei Zhao},
  title = {{On Multi-Dimensional Gains from Trade Maximization}},
  booktitle = {Proceedings of the 2021 ACM-SIAM Symposium on Discrete Algorithms},
  year = {2021},
  pages = {1079--1098},
  publisher = {Society for Industrial and Applied Mathematics},
  doi = {10.1137/1.9781611976465.67},
  note = {Full version: \href{https://arxiv.org/abs/2007.13934v2}{arXiv:2007.13934v2}},
  url = {https://doi.org/10.1137/1.9781611976465.67}
}

@misc{BDK21,
  author = {Moshe Babaioff and Shahar Dobzinski and Ron Kupfer},
  title = {{A Note on the Gains from Trade of the Random-Offerer Mechanism}},
  year = {2021},
  howpublished = {\href{https://arxiv.org/abs/2111.07790}{arXiv:2111.07790}},
  eprint = {2111.07790},
  archivePrefix = {arXiv},
  url = {https://arxiv.org/abs/2111.07790}
}

@misc{CGLM26,
  author = {Yang Cai and Vineet Gupta and Zun Li and Aranyak Mehta},
  title = {{A New Lower Bound for the Random Offerer Mechanism in Bilateral Trade using AI-Guided Evolutionary Search}},
  year = {2026},
  howpublished = {\href{https://arxiv.org/abs/2603.08679}{arXiv:2603.08679}},
  eprint = {2603.08679},
  archivePrefix = {arXiv},
  url = {https://arxiv.org/abs/2603.08679}
}

@misc{Jo26,
  author = {Sunghyeon Jo},
  title = {{Tighter Bounds for the Random-Offerer Mechanism in Bilateral Trade}},
  year = {2026},
  howpublished = {\href{https://arxiv.org/abs/2607.13959}{arXiv:2607.13959}},
  eprint = {2607.13959},
  archivePrefix = {arXiv},
  url = {https://arxiv.org/abs/2607.13959}
}

@inproceedings{DMSW22,
  author = {Yuan Deng and Jieming Mao and Balasubramanian Sivan and Kangning Wang},
  title = {{Approximately Efficient Bilateral Trade}},
  booktitle = {Proceedings of the 54th Annual ACM SIGACT Symposium on Theory of Computing},
  year = {2022},
  pages = {718--721},
  publisher = {Association for Computing Machinery},
  doi = {10.1145/3519935.3520054},
  url = {https://doi.org/10.1145/3519935.3520054}
}

@inproceedings{Fei22,
  author = {Yumou Fei},
  title = {{Improved Approximation to First-Best Gains-from-Trade}},
  booktitle = {Web and Internet Economics (WINE 2022)},
  editor = {Kristoffer Arnsfelt Hansen and Tracy Xiao Liu and Azarakhsh Malekian},
  series = {Lecture Notes in Computer Science},
  volume = {13778},
  year = {2022},
  pages = {204--218},
  publisher = {Springer},
  doi = {10.1007/978-3-031-22832-2_12},
  eprint = {2205.00140},
  archivePrefix = {arXiv},
  url = {https://doi.org/10.1007/978-3-031-22832-2_12}
}

@misc{HW25,
  author = {Jason Hartline and Kangning Wang},
  title = {{A Geometric Analysis of Gains from Trade}},
  year = {2025},
  howpublished = {\href{https://arxiv.org/abs/2508.06469}{arXiv:2508.06469}},
  eprint = {2508.06469},
  archivePrefix = {arXiv},
  url = {https://arxiv.org/abs/2508.06469}
}

@misc{BWZ26,
  author = {Xiaohui Bei and Wenhao Wu and Shengwei Zhou},
  title = {{The Power of Simple Mechanisms: A Tight \(e\)-Approximation for Gains from Trade in Matching Markets}},
  year = {2026},
  howpublished = {\href{https://arxiv.org/abs/2609.37997}{arXiv:2609.37997}},
  eprint = {2609.37997},
  archivePrefix = {arXiv},
  url = {https://arxiv.org/abs/2609.37997}
}
\endgroup

\appendix
\section{Proofs}\label{sec:proofs}

To prove the lemmas, we first analyze optimal offers under the hard
distributions and the properties of $C,D,h,H$.

In the proofs of the offer formulas and the recurrence below,
\(F\in\mathcal F\) denotes an arbitrary inner distribution, and
\(G=T_{x,\varepsilon}F\in\mathcal F\) denotes the distribution obtained by
adding a layer. We use the parameter conditions of Lemma~\ref{prop:recursion}.
The functions \(h_F,H_F\) and \(h_G,H_G\) correspond to these two seller
distributions, respectively. Gains from trade and buyer values are always
paired with the seller distribution currently under consideration.

\subsection{Optimal offers and surplus structure under the transformation}
\label{subsec:formulas}

We first establish a basic invariant of \(\mathcal F\) and use it to
characterize optimal offers for high buyer values.
We then explicitly compute the optimal offers and the corresponding
\(C_G,D_G,h_G\) after one transformation \(G=T_{x,\varepsilon}F\).

\begin{lemma}\label{lem:invariant-boundary}
  For every \(F\in\mathcal F\) with upper support endpoint \(z\),
  \begin{equation}
    (1-s)F(s)\le1-z\qquad(0\le s\le z).
    \label{eq:invariant}
  \end{equation}
  When \(b>1\), the optimal offer is \(z\), and
  \begin{equation}
    C_F(b)=D_F(b),\qquad h_F(b)=(1-\lambda)D_F(b)\ge0.
    \label{eq:fulltrade}
  \end{equation}
  In particular, \(H_F(u)=H_F(1)\) for \(u\ge1\).
  If \(a=a(F)\) is the lower support endpoint, then \(H_F(u)=0\) for
  \(0\le u<a\), as stated in~\eqref{eq:Htail}.
\end{lemma}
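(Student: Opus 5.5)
We prove the invariant~\eqref{eq:invariant} by induction along the construction of \(\mathcal F\) in Definition~\ref{def:recursiveFamily}, and then read off the remaining claims from it.

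\emph{Base case.} For \(F=\delta_1\) we have \(z=1\). The only point with \(F(s)\neq0\) in \([0,z]\) is \(s=1\), where \((1-s)F(s)=0=1-z\).

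\emph{Inductive step.} Assume \(F\) satisfies~\eqref{eq:invariant} with endpoint \(z\), and let \(G=T_{x,\varepsilon}F\), which has endpoint \(z'=1-x(1-\ell)\) and \(\ell=\varepsilon z\). Then \(1-z'=x(1-\ell)\), and we check three ranges of \(s\).
\begin{itemize}
\item On the hyperbolic range \(\ell<s<z'\), we get \((1-s)G(s)=x(1-\ell)\) exactly.
\item At \(s=z'\), the product is \(1-z'\).
\item On \(0\le s\le\ell\), write \(s=\varepsilon t\) with \(t\in[0,z]\). Then \((1-s)G(s)=x(1-\varepsilon t)F(t)\). Since \(1-\varepsilon t\ge 1-t\) is false in general, we instead use the cruder bound \(F(t)\le1\):
\[
(1-\varepsilon t)F(t)\le 1-\varepsilon t\le 1-\varepsilon t\cdot\mathbf 1_{\{F(t)>0\}}.
\]
Every \(t\) with \(F(t)>0\) satisfies \(t\ge a(F)\ge0\). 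The sharper statement we need is \((1-\varepsilon t)F(t)\le 1-\varepsilon z\), that is, \(1-\varepsilon t\le 1-\ell\) whenever \(F(t)\) is near \(1\). This is the one delicate point. I expect to handle it by writing
\[
(1-\varepsilon t)F(t)=(1-t)F(t)+(1-\varepsilon)tF(t)\le 1-z+(1-\varepsilon)tF(t)
\]
and using \(t\le z\), \(F\le1\). This gives \(\le 1-z+(1-\varepsilon)z=1-\varepsilon z=1-\ell\). Multiplying by \(x\) yields \((1-s)G(s)\le x(1-\ell)=1-z'\), which closes the induction.
\end{itemize}
Along the way we also record that \(z\le1\), with \(z<1\) once \(n\ge1\), and that \(G\) is a valid CDF, since \(x(1-\ell)/(1-\ell)=x=xF(z)\) gives continuity at \(\ell\).

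\emph{High buyer values.} Fix \(b>1\). For \(0\le p\le z\),
\[
(b-p)F(p)=(b-1)F(p)+(1-p)F(p)\le (b-1)F(p)+(1-z).
\]
Both terms are maximized at \(p=z\), where \(F(z)=1\) and the bound is attained with value \(b-z\). Prices \(p\in(z,b]\) give \(b-p<b-z\). At any \(p<z\) we have \(F(p)<1\), or else \(z\) would not be the upper endpoint, and \(b-1>0\), so the inequality is strict. Hence \(p_F(b)=z\) is the unique optimum. Then~\eqref{eq:C} gives \(C_F(b)=D_F(z)+(b-z)=D_F(b)\), because \(F=1\) on \([z,b]\). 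This yields~\eqref{eq:fulltrade}, with \(h_F(b)=(1-\lambda)D_F(b)\ge0\).

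\emph{Consequences for \(H_F\).} For \(u\ge1\), every \(b\in(1,u]\) has \(h_F(b)\ge0\). Also \(h_F(0)=C_F(0)-\lambda D_F(0)=0\) (if \(F\) has an atom at \(0\), the trade at zero gains still contributes \(0\)). So \(\min_{[0,u]}h_F=\min\{\min_{[0,1]}h_F,\ \min_{(1,u]}h_F\}=H_F(1)\), using \(H_F(1)\le h_F(0)=0\). For \(u<a\), every \(b\le u<a\) satisfies \(F(p)=0\) for all \(p\le b\). Thus \(D_F(b)=0\) and \(C_F(b)=0\), so \(h_F\equiv0\) on \([0,u]\) and \(H_F(u)=0\).

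The main obstacle is the inner range of the inductive step. The decomposition \((1-\varepsilon t)F=(1-t)F+(1-\varepsilon)tF\) together with \(t\le z\) is the step I would verify first.
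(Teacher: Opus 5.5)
Your proof is correct and follows the paper's route: induction over \(T_{x,\varepsilon}\) with equality on the hyperbolic layer, the split \((b-p)F(p)=(1-p)F(p)+(b-1)F(p)\) with \(F(p)<1\) for \(p<z\) to get the unique offer \(z\) when \(b>1\), and the same two observations giving \(H_F(u)=H_F(1)\) for \(u\ge1\) and \(H_F(u)=0\) for \(u<a\). The only variation is on the inner range: you write \(1-\varepsilon t=(1-t)+(1-\varepsilon)t\) and use \(tF(t)\le z\), whereas the paper writes \(1-\varepsilon t=\varepsilon(1-t)+(1-\varepsilon)\) and uses \(F\le1\); both give \(1-\ell\), and your aside that \(1-\varepsilon t\ge1-t\) is false is a slip (it always holds; the reverse inequality is the one that fails), though nothing depends on it.
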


\begin{proof}
  Equation~\eqref{eq:invariant} holds for \(F=\delta_1\).
  Suppose it holds for \(F\), and let \(G=T_{x,\varepsilon}F\),
  \(\ell=\varepsilon z\), and \(z'=1-x(1-\ell)\).
  For \(0\le s=\varepsilon t\le\ell\),
  \begin{equation*}
    \begin{aligned}
      (1-s)G(s)
       & =x\bigl[\varepsilon(1-t)F(t)+(1-\varepsilon)F(t)\bigr] \\
       & \le x\bigl[\varepsilon(1-z)+1-\varepsilon\bigr]
      =x(1-\ell)=1-z'.
    \end{aligned}
  \end{equation*}
  Equality holds on \([\ell,z']\), so~\eqref{eq:invariant} follows by induction.

  When \(b>1\) and \(0\le p<z\), the inequality \(F(p)<1\) gives
  \[
    (b-p)F(p)=(1-p)F(p)+(b-1)F(p)<1-z+(b-1)=b-z.
  \]
  When \(z<p\le b\), the offer yields profit \(b-p<b-z\).
  Thus, the unique optimal offer is \(z\), all seller types accept, and
  \eqref{eq:fulltrade} follows.
  Since \(H_F(1)\le h_F(0)=0\), the nonnegative surplus gaps for \(b>1\)
  cannot lower the minimum. Hence \(H_F(u)=H_F(1)\) for \(u\ge1\).

  When \(0\le u<a\), every \(0\le b\le u\) admits only offers at most
  \(b<a\), so no trade occurs.
  Thus \(C_F(b)=D_F(b)=h_F(b)=0\), and taking the minimum gives \(H_F(u)=0\).
\end{proof}

\begin{proposition}\label{prop:transformed-price}
  Let \(G=T_{x,\varepsilon}F\), where \(F\in\mathcal F\) has upper support
  endpoint \(z\), and write \(\ell=\varepsilon z\) and \(z'=1-x(1-\ell)\).
  Then
  \begin{equation}
    p_G(b)=
    \begin{cases}
      \varepsilon p_F(b/\varepsilon), & 0\le b\le\varepsilon, \\
      \ell,                           & \varepsilon<b\le1,    \\
      z',                             & b>1.
    \end{cases}
    \label{eq:pG}
  \end{equation}
\end{proposition}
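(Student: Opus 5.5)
The plan is to compute the buyer's profit \(\pi_b(p):=(b-p)G(p)\) separately on the three pieces of \(G\): the scaled inner piece \(p\le\ell\), the hyperbolic piece \(\ell<p<z'\), and the top region \(p\ge z'\). In each case I identify the lowest maximizer, as the tie-breaking rule in~\eqref{eq:buyerprice} requires. Three facts from earlier are used throughout:
\begin{itemize}
\item \(G(\ell)=xF(z)=x\), which agrees with the hyperbolic formula at \(s=\ell\), so \(\pi_b\) is continuous at \(\ell\);
\item the invariant~\eqref{eq:invariant} of Lemma~\ref{lem:invariant-boundary} applied to \(F\);
\item on the hyperbolic piece,
\[
\frac{d}{dp}\,\frac{x(1-\ell)(b-p)}{1-p}=\frac{x(1-\ell)(b-1)}{(1-p)^2},
\]
so \(\pi_b\) is nonincreasing there when \(b\le1\) and strictly decreasing when \(b<1\).
\end{itemize}

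\textbf{Case \(b>1\).} Lemma~\ref{lem:invariant-boundary} applies directly to \(G\in\mathcal F\), whose upper endpoint is \(z'\). It gives \(p_G(b)=z'\).

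\textbf{Case \(\varepsilon<b\le1\).}
\begin{itemize}
\item \emph{Hyperbolic piece.} By monotonicity, \(\pi_b(p)\le\pi_b(\ell)=(b-\ell)x\) for \(\ell<p<z'\).
\item \emph{Top region.} If \(b\ge z'\) and \(p\ge z'\), then \(\pi_b(p)\le b-z'\). A direct computation gives
\[
(b-\ell)x-(b-z')=(1-b)(1-x)\ge0.
\]
\item \emph{Inner piece.} Write \(p=\varepsilon t\) with \(0\le t\le z\). Using~\eqref{eq:invariant} and \(F\le1\),
\[
(b-\varepsilon t)F(t)=\varepsilon(1-t)F(t)+(b-\varepsilon)F(t)\le\varepsilon(1-z)+(b-\varepsilon)=b-\ell.
\]
Since \(b-\varepsilon>0\), the inequality is strict whenever \(F(t)<1\), that is, whenever \(t<z\). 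So every \(p<\ell\) is strictly worse than \(\ell\).
\end{itemize}
Hence \(\ell\) attains the maximum. Other maximizers can exist only above \(\ell\) (ties occur only when \(b=1\)), so the lowest maximizer is \(\ell\).

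\textbf{Case \(0\le b\le\varepsilon\).} Here \(b<1\) whenever \(\varepsilon<1\).
\begin{itemize}
\item \emph{Reduction to \(p\le\ell\).} Offers \(p\in(\ell,b]\), if any, lie on the hyperbolic piece, where \(\pi_b\) is strictly decreasing. They are therefore strictly worse than \(p=\ell\), and the lowest maximizer lies in \([0,\min\{b,\ell\}]\).
\item \emph{Rescaling.} With \(p=\varepsilon t\), we get \(\pi_b(\varepsilon t)=\varepsilon x\,(b/\varepsilon-t)F(t)\). This is a positive multiple of the objective in \(F\)'s buyer problem at value \(b/\varepsilon\le1\), restricted to \(t\le\min\{b/\varepsilon,z\}\).
\item \emph{Removing the restriction \(t\le z\).} On \((z,b/\varepsilon]\) we have \(F=1\), so the objective \((b/\varepsilon-t)\) is strictly decreasing there. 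No maximizer of \(F\)'s unrestricted problem lies there, so the restriction does not change the argmax.
\end{itemize}
Scaling back gives \(p_G(b)=\varepsilon p_F(b/\varepsilon)\). At the boundary \(b=\varepsilon\) this formula is consistent with the previous case: \(p_F(1)=z\) by the inner-piece inequality with \(b-\varepsilon=0\), so it also yields \(\ell\).

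The main obstacle is the inner-piece comparison in the middle case. It is the only step that needs the invariant~\eqref{eq:invariant}, and its strictness argument is what makes \(\ell\) the \emph{lowest} maximizer. The remaining steps are monotonicity checks plus the rescaling.
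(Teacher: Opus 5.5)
Your proof follows the paper's route: the same three cases, the same monotonicity of the hyperbolic profit, and, for \(\varepsilon<b\le1\), the rescaled invariant~\eqref{eq:invariant}. That last step is exactly how Lemma~\ref{lem:invariant-boundary} makes \(z\) the unique inner optimum at value \(b/\varepsilon>1\). The argument is sound apart from two slips.

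First, in the case \(0\le b\le\varepsilon\) you claim that offers in \((\ell,b]\) lie on the hyperbolic piece. That requires \(b<z'\), which the proposition does not guarantee. When \(z<1\) one can have \(\varepsilon\ge z'\). For instance, \(F=T_{0.9,0.5}\delta_1\) has \(z=0.55\), and then \(G=T_{0.9,0.9}F\) has \(\ell=0.495\) and \(z'=0.5455<\varepsilon\). The conclusion still holds. For \(p\in[z',b]\), your own identity gives \(\pi_b(p)\le b-z'=(b-\ell)x-(1-b)(1-x)<(b-\ell)x\), because \(b<1\). Alternatively, restrict to \(p\le\min\{b,z'\}\) at the outset, as the paper does.

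Second, the closing remark that \(p_F(1)=z\) is false. At \(b=\varepsilon\) your inner inequality loses its strictness, and equality in fact holds on all of \(F\)'s outermost layer. Suppose \(F=T_{x',\varepsilon'}F''\), where \(z''\) is the upper endpoint of \(F''\) and \(\ell_F=\varepsilon'z''\). Then \((1-t)F(t)=1-z\) for all \(t\in[\ell_F,z]\), so the lowest-price rule gives \(p_F(1)=\ell_F<z\). Likewise \(p_{\delta_1}(1)=0<1\). Hence \(p_G(\varepsilon)=\varepsilon p_F(1)<\ell\), so \(p_G\) genuinely jumps at \(b=\varepsilon\) rather than being ``consistent'' with the second case. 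Your Case~1 argument already covers \(b=\varepsilon\), so you can simply delete this remark.
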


\begin{proof}
  Profit decreases with the price above the upper support endpoint, so it
  suffices to consider \(0\le p\le\min\{b,z'\}\).

  When \(0\le b\le\varepsilon\), write \(p=\varepsilon t\) on the inner
  interval \([0,\ell]\). Then
  \[
    (b-p)G(p)=x\varepsilon(b/\varepsilon-t)F(t).
  \]
  The lowest optimal offer on this interval is therefore
  \(\varepsilon p_F(b/\varepsilon)\).
  If \(b<\ell\), there is no feasible offer in the outer layer.
  If \(b\ge\ell\), then \(b\le\varepsilon<1\) gives
  \[
    \frac{d}{dp}\left(\frac{x(1-\ell)(b-p)}{1-p}\right)
    =\frac{x(1-\ell)(b-1)}{(1-p)^2}<0.
  \]
  Hence, no feasible outer-layer offer is better than \(\ell\), which is
  already included in the inner comparison.
  This proves the first case of~\eqref{eq:pG}.

  When \(\varepsilon<b\le1\), we have \(b/\varepsilon>1\).
  Lemma~\ref{lem:invariant-boundary} gives \(z\) as the unique optimal offer
  in the old distribution, so \(\ell\) is the unique optimal inner offer.
  For feasible offers on the outer interval \([\ell,z']\),
  \[
    \frac{d}{dp}\left(\frac{x(1-\ell)(b-p)}{1-p}\right)
    =\frac{x(1-\ell)(b-1)}{(1-p)^2}\le0.
  \]
  Thus, \(\ell\) is also optimal on this interval.
  When \(b=1\), the profit above is constant, and the convention of choosing
  the lowest optimal offer again selects \(\ell\).

  When \(b>1\), the distribution \(G\in\mathcal F\) has upper support
  endpoint \(z'\), so Lemma~\ref{lem:invariant-boundary} gives \(p_G(b)=z'\).
\end{proof}

\begin{lemma}\label{lem:transformed-gains}
  Under the conditions of Proposition~\ref{prop:transformed-price}, assume
  further that \(0<\varepsilon<1-x\). Then
  \begin{equation}
    C_G(b)=
    \begin{cases}
      x\varepsilon C_F(b/\varepsilon), & 0\le b\le\varepsilon, \\
      x\varepsilon D_F(z)+x(b-\ell),   & \varepsilon<b\le1,    \\
      D_G(b),                          & b>1.
    \end{cases}
    \label{eq:CG}
  \end{equation}
  Let \(k(b)=(b-\ell)/(1-\ell)\). We have
  \begin{equation}
    D_G(b)=
    \begin{cases}
      x\varepsilon D_F(b/\varepsilon),       & 0\le b\le\ell, \\
      x\varepsilon D_F(z)+(1-\ell)D_x(k(b)), & b>\ell.
    \end{cases}
    \label{eq:DG}
  \end{equation}
  The first case of \(C_G\) ends at \(\varepsilon\), whereas the first case
  of \(D_G\) ends at \(\ell\le\varepsilon\).
  To describe the discrepancy uniformly on \([0,\varepsilon]\), define
  \begin{equation}
    \Delta(b)=
    \begin{cases}
      0,                                               & 0\le b\le\ell,        \\
      x\displaystyle\int_\ell^b\frac{s-\ell}{1-s}\,ds, & \ell<b\le\varepsilon.
    \end{cases}
    \label{eq:Delta}
  \end{equation}
  Then \(D_G(b)=x\varepsilon D_F(b/\varepsilon)+\Delta(b)\) for
  \(0\le b\le\varepsilon\), and
  \begin{equation}
    h_G(b)=
    \begin{cases}
      x\varepsilon h_F(b/\varepsilon)-\lambda\Delta(b),          & 0\le b\le\varepsilon, \\
      x\varepsilon(1-\lambda)D_F(z)+(1-\ell)h_{\lambda,x}(k(b)), & \varepsilon<b\le1,    \\
      (1-\lambda)D_G(b),                                         & b>1.
    \end{cases}
    \label{eq:hG}
  \end{equation}
\end{lemma}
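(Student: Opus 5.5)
The plan is to compute \(D_G\) directly from~\eqref{eq:D}, then \(C_G\) from~\eqref{eq:C} using the offers in Proposition~\ref{prop:transformed-price}, and finally subtract to get \(h_G\).

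\textbf{Step 1: the formula for \(D_G\).} For \(0\le b\le\ell\), the substitution \(s=\varepsilon t\) gives
\[
  D_G(b)=\int_0^b xF(s/\varepsilon)\,ds=x\varepsilon D_F(b/\varepsilon).
\]
In particular \(D_G(\ell)=x\varepsilon D_F(z)\). For \(b>\ell\), I will substitute \(s=\ell+(1-\ell)w\). On the outer interval this gives
\[
  G(s)=\frac{x(1-\ell)}{(1-\ell)(1-w)}=\frac{x}{1-w}=Q_x(w).
\]
The endpoint \(s=z'\) corresponds to \(w=1-x\), and \(G=1=Q_x\) beyond it. Hence
\[
  \int_\ell^b G(s)\,ds=(1-\ell)\int_0^{k(b)}Q_x(w)\,dw=(1-\ell)D_x(k(b)),
\]
which proves~\eqref{eq:DG}. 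The hypothesis \(\varepsilon<1-x\) yields \(\ell<\varepsilon<z'\), so \([\ell,\varepsilon]\) lies inside the hyperbolic part of \(G\).

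\textbf{Step 2: the discrepancy \(\Delta\) and the formula for \(C_G\).} For \(\ell<b\le\varepsilon\), I will compare \(D_G(b)\) with \(x\varepsilon D_F(b/\varepsilon)\). Since \(F=1\) above \(z\), the latter equals \(x\varepsilon D_F(z)+x(b-\ell)\). Therefore
\[
  D_G(b)-x\varepsilon D_F(b/\varepsilon)=\int_\ell^b\Bigl(\frac{x(1-\ell)}{1-s}-x\Bigr)ds=x\int_\ell^b\frac{s-\ell}{1-s}\,ds,
\]
which is \(\Delta(b)\).

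Next I turn to \(C_G\), case by case.
\begin{itemize}
  \item For \(b\le\varepsilon\), the offer is \(p=\varepsilon p_F(b/\varepsilon)\le\ell\). Then \(C_G(b)=D_G(p)+(b-p)G(p)\). Here \(D_G(p)=x\varepsilon D_F(p/\varepsilon)\) and \(G(p)=xF(p/\varepsilon)\). This gives \(x\varepsilon C_F(b/\varepsilon)\).
  \item For \(\varepsilon<b\le1\), the offer is \(\ell\), and \(G(\ell)=x\). This gives \(C_G(b)=x\varepsilon D_F(z)+x(b-\ell)\).
  \item For \(b>1\), Lemma~\ref{lem:invariant-boundary} gives \(C_G=D_G\).
\end{itemize}

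\textbf{Step 3: the formula for \(h_G\).} The first and third cases of~\eqref{eq:hG} follow by direct subtraction. The middle case, \(\varepsilon<b\le1\), is the only step needing care. There \(k(b)\in(0,1]\), and I must check that
\[
  x(b-\ell)=(1-\ell)\,x\,k(b)=(1-\ell)C_x(k(b)),
\]
using \(C_x(w)=xw\) on \([0,1]\). With this identity,
\[
  h_G(b)=x\varepsilon D_F(z)(1-\lambda)+(1-\ell)\bigl[C_x(k(b))-\lambda D_x(k(b))\bigr],
\]
which is the claimed middle case. The main obstacle is keeping the regimes consistent: I will verify that the boundary points \(\ell\) and \(\varepsilon\) are assigned to the correct cases, and that \(k(b)\le1\) always holds in this range.
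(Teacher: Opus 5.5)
Your proposal is correct and follows the paper's proof essentially step for step: you substitute the offers from Proposition~\ref{prop:transformed-price} into~\eqref{eq:C}, change variables \(s=\varepsilon t\) and \(w=k(s)\) to obtain~\eqref{eq:DG}, identify \(\Delta(b)\) as \(\int_\ell^b(G(s)-x)\,ds\), and use \(x(b-\ell)=(1-\ell)C_x(k(b))\) for the middle case of~\eqref{eq:hG}. One small correction: in general only \(\ell=\varepsilon z\le\varepsilon\) holds, with equality when \(F=\delta_1\). This is harmless because the range \((\ell,\varepsilon]\) is then empty, and the hypothesis \(\varepsilon<1-x\) is needed only for \(\varepsilon<1-x<z'\), not for the comparison between \(\ell\) and \(\varepsilon\).
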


\begin{proof}
  Substituting the optimal offers from Proposition~\ref{prop:transformed-price}
  into the definition of \(C_G\) gives~\eqref{eq:CG}.

  For \(b\le\ell\), substituting \(s=\varepsilon t\) in
  \(D_G(b)=\int_0^bG(s)\,ds\) gives the first case of~\eqref{eq:DG}.
  For \(s\ge\ell\), we have \(G(s)=Q_x(k(s))\).
  Using \(k(s)\) as the integration variable normalizes the new hyperbolic
  layer to the standard layer \(Q_x\), giving
  \[
    D_G(b)=x\varepsilon D_F(z)+(1-\ell)\int_0^{k(b)}Q_x(t)\,dt,
    \qquad b>\ell,
  \]
  which is the second case of~\eqref{eq:DG}.

  When \(b\le\ell\), we have \(\Delta(b)=0\).
  When \(\ell<b\le\varepsilon\), we have \(F(b/\varepsilon)=1\) and
  \(\varepsilon<1-x<z'\), so
  \[
    D_G(b)-x\varepsilon D_F(b/\varepsilon)
    =\int_\ell^b\bigl[G(s)-x\bigr]\,ds
    =x\int_\ell^b\frac{s-\ell}{1-s}\,ds=\Delta(b).
  \]
  Combining this identity with~\eqref{eq:CG} and using \(h=C-\lambda D\)
  gives~\eqref{eq:hG}. The second case also uses
  \(x(b-\ell)=(1-\ell)C_x(k(b))\).
\end{proof}

\subsection{Proof of Lemma~\ref{prop:recursion}}\label{subsec:lemmaproofs}
We first establish three auxiliary lemmas.

\begin{lemma}\label{lem:first}
  Uniformly for \(0\le u\le\varepsilon\),
  \begin{equation*}
    H_G(u)=x\varepsilon H_F(u/\varepsilon)+O(u^2).
  \end{equation*}
\end{lemma}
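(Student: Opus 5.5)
We work directly from the first case of \eqref{eq:hG} in Lemma~\ref{lem:transformed-gains}: for \(0\le b\le\varepsilon\),
\[
  h_G(b)=x\varepsilon h_F(b/\varepsilon)-\lambda\Delta(b).
\]
The claim then follows by taking minima over \(b\in[0,u]\), provided we show \(0\le\Delta(b)\le cb^2\) on \([0,\varepsilon]\) with \(c\) depending only on \(x\) (and \(\lambda\)). Throughout, we assume \(\varepsilon<1-x\), as in Lemma~\ref{lem:transformed-gains}, and moreover \(\varepsilon\le1/2\).

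\textbf{Step 1: bound \(\Delta\).} For \(\ell<b\le\varepsilon\) the integrand satisfies
\[
  0\le\frac{s-\ell}{1-s}\le\frac{s-\ell}{1-\varepsilon}\le 2(s-\ell),
\]
since \(s\le\varepsilon\le1/2\). Integrating over \([\ell,b]\) gives
\[
  0\le\Delta(b)\le x(b-\ell)^2\le xb^2.
\]
For \(b\le\ell\) we have \(\Delta(b)=0\). Hence \(0\le\lambda\Delta(b)\le b^2\) uniformly on \([0,\varepsilon]\), independently of \(F\).

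\textbf{Step 2: sandwich the minimum.} Write \(\psi(b)=x\varepsilon h_F(b/\varepsilon)\), so that \(h_G=\psi-\lambda\Delta\) on \([0,\varepsilon]\).

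For the upper bound, Step 1 gives \(h_G\le\psi\) pointwise, so \(H_G(u)\le\min_{[0,u]}\psi\). The substitution \(b=\varepsilon t\) identifies this minimum as \(x\varepsilon\min_{0\le t\le u/\varepsilon}h_F(t)=x\varepsilon H_F(u/\varepsilon)\). This requires the minimum defining \(H_F\) to be attained. For \(F\in\mathcal F\), \(h_F\) is piecewise continuous with finitely many pieces, which one can see inductively from \eqref{eq:hG}; if necessary we read \(\min\) as \(\inf\), which is harmless for this estimate.

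For the lower bound, Step 1 also gives \(h_G(b)\ge\psi(b)-b^2\ge\psi(b)-u^2\) for every \(b\le u\). Minimizing over \(b\in[0,u]\) yields
\[
  H_G(u)\ge x\varepsilon H_F(u/\varepsilon)-u^2.
\]

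Combining the two bounds,
\[
  x\varepsilon H_F(u/\varepsilon)-u^2\le H_G(u)\le x\varepsilon H_F(u/\varepsilon),
\]
which is the claimed estimate. The \(O(u^2)\) constant is at most \(\lambda x\le1\), uniform in \(F\in\mathcal F\) and in \(\varepsilon\).

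The main obstacle is only bookkeeping. First, Lemma~\ref{lem:transformed-gains} must be applicable; this uses \(\varepsilon<1-x\), which holds once \(\varepsilon_0\le\min\{1-x,1/2\}\). Second, the bound must not depend on the inner distribution \(F\); this holds because the error term \(\Delta\) involves only \(x\), \(\ell\), and \(b\), never \(F\).
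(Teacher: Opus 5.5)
Your proof is correct and takes the same approach as the paper: bound \(0\le\Delta(b)\le xb^2\) on \([0,\varepsilon]\) using \(\varepsilon\le\min\{1/2,1-x\}\), then take minima over \(b\in[0,u]\) in the first case of \eqref{eq:hG}. Your explicit two-sided bound adds the one-sided inequality \(H_G(u)\le x\varepsilon H_F(u/\varepsilon)\), which follows from \(\Delta\ge0\), and spells out the step the paper states in one line.
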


\begin{proof}[Proof of Lemma~\ref{lem:first}]
  Take \(0<\varepsilon<\min\{1/2,1-x\}\).
  By~\eqref{eq:Delta}, for \(\ell<b\le\varepsilon\),
  \[
    0\le\Delta(b)
    \le x\int_\ell^b 2s\,ds
    \le xb^2.
  \]
  For \(0\le b\le\ell\), we have \(\Delta(b)=0\).
  Thus, \(\Delta(b)=O(b^2)\) holds uniformly on \([0,\varepsilon]\), with
  a constant independent of \(F\).
  For any \(0\le b\le u\le\varepsilon\), equation~\eqref{eq:hG} therefore gives
  \[
    h_G(b)=x\varepsilon h_F(b/\varepsilon)+O(u^2).
  \]
  Taking the minimum over \(b\in[0,u]\) on both sides proves the claim,
  with an implicit constant independent of \(F\).
\end{proof}

\begin{lemma}\label{lem:second}
  For \(\varepsilon\le u\le\tau\),
  \begin{equation*}
    H_G(u)=H_G(\varepsilon).
  \end{equation*}
\end{lemma}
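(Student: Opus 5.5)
Since \(H_G(u)=\min_{0\le b\le u}h_G(b)\) is nonincreasing in \(u\), we already have \(H_G(u)\le H_G(\varepsilon)\) for \(u\ge\varepsilon\). The statement is therefore equivalent to showing that \(h_G(b)\ge H_G(\varepsilon)\) for every \(b\in(\varepsilon,\tau]\). The plan is to show the stronger fact that \(h_G(b)\ge h_G(\varepsilon)\) on this interval, which suffices because \(h_G(\varepsilon)\ge H_G(\varepsilon)\).

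For \(\varepsilon<b\le\tau\le1\), the middle case of~\eqref{eq:hG} applies:
\[
  h_G(b)=x\varepsilon(1-\lambda)D_F(z)+(1-\ell)\,h_{\lambda,x}(k(b)),
  \qquad k(b)=\frac{b-\ell}{1-\ell}.
\]
Since \(k(b)\le b\le\tau\) and \(h_{\lambda,x}\ge0\) on \([0,\tau]\) (Section~\ref{sec:ideal}, using \(x<1-\tau\)), this gives
\[
  h_G(b)\ge x\varepsilon(1-\lambda)D_F(z).
\]

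Next we evaluate \(h_G(\varepsilon)\) from the first case of~\eqref{eq:hG}. Because \(\varepsilon/\varepsilon=1\),
\[
  h_G(\varepsilon)=x\varepsilon h_F(1)-\lambda\Delta(\varepsilon).
\]
By Proposition~\ref{prop:transformed-price} (or directly from Lemma~\ref{lem:invariant-boundary}, since \(p_F(1)=z\) by the tie-breaking convention), a buyer with value \(1\) facing \(F\) offers \(z\) and every seller accepts. Hence \(C_F(1)=D_F(1)=D_F(z)+(1-z)\) and
\[
  h_F(1)=(1-\lambda)D_F(1)=(1-\lambda)\bigl(D_F(z)+1-z\bigr).
\]
Substituting,
\[
  h_G(\varepsilon)=x\varepsilon(1-\lambda)D_F(z)+x\varepsilon(1-\lambda)(1-z)-\lambda\Delta(\varepsilon).
\]

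Comparing the two displays, it remains to show
\[
  x\varepsilon(1-\lambda)(1-z)\le\lambda\Delta(\varepsilon).
\]
This is the main obstacle, because \(\Delta(\varepsilon)\) is only of order \(\varepsilon^2\) while the left side is of order \(\varepsilon\). The inequality can fail unless \(z\) is close to \(1\), and it cannot be repaired by lowering the target from \(h_G(\varepsilon)\) to \(H_G(\varepsilon)\) in general, since \(H_G(\varepsilon)\) could equal \(h_G(\varepsilon)\).

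I would therefore first try to establish directly that \(h_G(\varepsilon)\) is at most the infimum of \(h_G\) on \((\varepsilon,\tau]\), which is approached as \(b\downarrow\varepsilon\). The right-hand limit is
\[
  h_G(\varepsilon^+)=x\varepsilon(1-\lambda)D_F(z)+(1-\ell)\,h_{\lambda,x}\bigl(k(\varepsilon)\bigr),
\]
where \(h_{\lambda,x}(k)\approx x(1-\lambda)k\) for small \(k\). Since \(k(\varepsilon)\approx\varepsilon(1-z)\), the second term is \(\approx x(1-\lambda)\varepsilon(1-z)\), which matches the extra term in \(h_G(\varepsilon)\) to first order. So the task reduces to a second-order comparison. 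Expanding \(h_{\lambda,x}(k)=x[k+\lambda\log(1-k)]\) shows \(h_{\lambda,x}(k)\ge x\bigl((1-\lambda)k-\lambda k^2\bigr)\) for small \(k\), and I would compare this against the exact value \(\Delta(\varepsilon)=x\int_\ell^\varepsilon\frac{s-\ell}{1-s}\,ds\).

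If the jump at \(b=\varepsilon\) turns out to be a genuine positive gap, the fallback is to define the minimum using \(\liminf\) of \(h_G\) as \(b\downarrow\varepsilon\) and argue by monotonicity of \(h_{\lambda,x}\). The key tool is that \(h_{\lambda,x}(k(b))\) is minimized on \([k(\varepsilon),\tau]\) at an endpoint: \(h_{\lambda,x}\) is concave there because \(\log(1-k)\) is concave, and it vanishes at \(0\) and \(\tau\). Since \(k(b)\le b\le\tau\) throughout \((\varepsilon,\tau]\), the minimum over the interval is then at \(b\downarrow\varepsilon\), and only the boundary comparison remains.
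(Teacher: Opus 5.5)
\textbf{There is a gap: your proposal never finishes the proof, and the missing step is one you already have the ingredients for.} You correctly show that for \(\varepsilon<b\le\tau\), \(h_G(b)\ge x\varepsilon(1-\lambda)D_F(z)\ge0\).

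Now compare with \(b=0\) instead of \(b=\varepsilon\). Since \(0\in[0,\varepsilon]\) and \(h_G(0)=C_G(0)-\lambda D_G(0)=0\), you get \(H_G(\varepsilon)\le h_G(0)=0\le h_G(b)\) for every \(b\in(\varepsilon,\tau]\). So points in \((\varepsilon,u]\) cannot lower the minimum, and \(H_G(u)=H_G(\varepsilon)\) for \(\varepsilon\le u\le\tau\). This is exactly the paper's one-line proof.

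Instead, you replace the target \(H_G(\varepsilon)\) by the stronger and unnecessary target \(h_G(\varepsilon)\), and get stuck. The proposal then ends with a ``second-order comparison'' and a \(\liminf\) ``fallback'' that are never carried out. Redefining the minimum through a \(\liminf\) would also change the statement being proved.

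Your remark that lowering the target to \(H_G(\varepsilon)\) cannot help is backwards. Suppose your inequality \(x\varepsilon(1-\lambda)(1-z)\le\lambda\Delta(\varepsilon)\) fails. Then your own formula gives \(h_G(\varepsilon)>x\varepsilon(1-\lambda)D_F(z)\ge0\ge H_G(\varepsilon)\). So in precisely that case, \(H_G(\varepsilon)\neq h_G(\varepsilon)\), and the weaker target is what rescues the argument.

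Separately, your evaluation of \(h_F(1)\) is incorrect. Lemma~\ref{lem:invariant-boundary} gives the offer \(z\) only for \(b>1\). At \(b=1\), when \(F\neq\delta_1\), the profit \((1-p)F(p)\) equals \(1-z\) on the whole outer hyperbolic layer of \(F\). The lowest-price convention therefore selects the inner endpoint of that layer, not \(z\). This is the case \(\varepsilon<b\le1\), \(p_G(b)=\ell\), of Proposition~\ref{prop:transformed-price} applied one level down.

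Consequently \(C_F(1)<D_F(1)\) and \(h_F(1)\neq(1-\lambda)D_F(1)\). The first-order ``obstacle'' term \(x\varepsilon(1-\lambda)(1-z)\) is an artifact of this error; in fact \(h_G\) jumps down at \(b=\varepsilon\).

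Your concavity remark is also not quite right. It only shows that \(h_{\lambda,x}\circ k\) on \((\varepsilon,\tau]\) is minimized at one of the two ends, not necessarily as \(b\downarrow\varepsilon\). Near \(b=\tau\) the value is of order \(\ell=\varepsilon z\), which can be the smaller one. None of this matters once you compare against \(h_G(0)=0\), because both ends are nonnegative.
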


\begin{proof}[Proof of Lemma~\ref{lem:second}]
  When \(\varepsilon<b\le\tau\), we have \(0\le k(b)\le b\le\tau\).
  Equations~\eqref{eq:idealh} and~\eqref{eq:hG} give
  \[
    h_G(b)=x\varepsilon(1-\lambda)D_F(z)+(1-\ell)h_{\lambda,x}(k(b))\ge0.
  \]
  Since \(H_G(\varepsilon)\le h_G(0)=0\),
  \[
    H_G(u)=H_G(\varepsilon)\qquad(\varepsilon\le u\le\tau).\qedhere
  \]
\end{proof}

\begin{lemma}\label{lem:third}
  For \(u\ge\tau\),
  \begin{equation*}
    H_G(u)=H_{\lambda,x}(u)+O(\varepsilon).
  \end{equation*}
\end{lemma}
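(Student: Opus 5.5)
For \(u\ge\tau\), the plan is to split \([0,u]\) into the three pieces \([0,\varepsilon]\), \((\varepsilon,\tau]\) and \([\tau,u]\), and to take the minimum of \(h_G\) over each piece. By Lemma~\ref{lem:invariant-boundary}, \(H_G(u)=H_G(1)\) and \(H_{\lambda,x}(u)=H_{\lambda,x}(1)\) for \(u\ge1\). It therefore suffices to treat \(\tau\le u\le1\), where \(H_{\lambda,x}(u)=h_{\lambda,x}(u)\) by~\eqref{eq:idealH}, and \(h_{\lambda,x}\) is strictly decreasing on \([\tau,1]\) with \(h_{\lambda,x}(\tau)=0\).

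\textbf{The piece \([\tau,u]\).} On \(\varepsilon<b\le1\), equation~\eqref{eq:hG} gives
\[
h_G(b)=x\varepsilon(1-\lambda)D_F(z)+(1-\ell)h_{\lambda,x}(k(b)).
\]
The first term lies in \([0,x\varepsilon]\), because \(D_F(z)\le z\le1\). Since \(0\le\ell\le\varepsilon\), we have \(|k(b)-b|=\ell(1-b)/(1-\ell)\le2\varepsilon\) for small \(\varepsilon\). Moreover, \(h_{\lambda,x}\) is Lipschitz on \([0,1]\) with a constant depending only on \(\lambda,x\): on \([0,1-x]\) its derivative is \(x(1-\lambda/(1-b))\), which is bounded because \(b\le1-x\), and on \([1-x,1]\) it is linear. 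Together with \(|h_{\lambda,x}|\le1\), these facts give
\[
h_G(b)=h_{\lambda,x}(b)+O(\varepsilon)
\]
uniformly for \(b\in(\varepsilon,1]\), with a constant independent of \(F\). Taking minima over \([\tau,u]\), and using that \(h_{\lambda,x}\) is decreasing there, gives
\[
\min_{[\tau,u]}h_G=h_{\lambda,x}(u)+O(\varepsilon).
\]

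\textbf{The other two pieces.} By Lemma~\ref{lem:second} the minimum over \((\varepsilon,\tau]\) is at least \(0\), and this is \(\ge h_{\lambda,x}(u)\). The minimum over \([0,\varepsilon]\) equals \(H_G(\varepsilon)\). For an upper bound, \(H_G(\varepsilon)\le h_G(0)=0\). For a lower bound, Lemma~\ref{lem:first} gives
\[
H_G(\varepsilon)=x\varepsilon H_F(1)+O(\varepsilon^2).
\]
Then \(H_F(1)\ge\min_{[0,1]}h_F\ge-\lambda D_F(1)\ge-1\), because \(C_F\ge0\). Hence \(H_G(\varepsilon)\ge-x\varepsilon-O(\varepsilon^2)=-O(\varepsilon)\) uniformly in \(F\). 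So the first piece contributes a value in \([-O(\varepsilon),0]\), and since \(h_{\lambda,x}(u)\le0\) this is \(\ge h_{\lambda,x}(u)-O(\varepsilon)\).

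\textbf{Combining.} The overall minimum is
\[
H_G(u)=\min\bigl\{H_G(\varepsilon),\ \min_{(\varepsilon,\tau]}h_G,\ \min_{[\tau,u]}h_G\bigr\}.
\]
The third term is \(h_{\lambda,x}(u)+O(\varepsilon)\), and the first two are bounded below by \(h_{\lambda,x}(u)-O(\varepsilon)\). Hence
\[
H_G(u)=h_{\lambda,x}(u)+O(\varepsilon)=H_{\lambda,x}(u)+O(\varepsilon),
\]
and the case \(u\ge1\) reduces to \(u=1\) as noted at the start.

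\textbf{Main obstacle.} The main difficulty is uniformity in \(F\). It enters only through \(D_F(z)\le1\) and the lower bound \(H_F(1)\ge-1\), and both of these are crude but independent of \(F\) and of the depth. The Lipschitz estimate for \(h_{\lambda,x}\) near \(b=1-x\) also needs care; it holds because \(x<1-\tau\) fixes the constants depending on \(\lambda,x\).
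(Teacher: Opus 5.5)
Your proposal is correct and follows essentially the paper's argument: reduce to \(\tau\le u\le1\), show \(h_G=h_{\lambda,x}+O(\varepsilon)\) uniformly on \((\varepsilon,1]\) via a Lipschitz bound, bound \(h_G\) below by \(-O(\varepsilon)\) on \([0,\tau]\), and use that \(h_{\lambda,x}\) is nonpositive and decreasing on \([\tau,1]\). The differences are only cosmetic. You lower-bound \(H_G(\varepsilon)\) through Lemma~\ref{lem:first}, where the paper uses the direct estimate \(h_G(b)\ge-\lambda b\). Also, the nonnegativity of \(h_G\) on \((\varepsilon,\tau]\) that you attribute to Lemma~\ref{lem:second} comes from its proof rather than its statement; the statement alone gives the lower bound \(H_G(\varepsilon)\), which works equally well.
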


\begin{proof}[Proof of Lemma~\ref{lem:third}]
  Take \(0<\varepsilon<\min\{1/2,\tau/2\}\).
  Since \(\tau>1-\lambda\) and \(x<1-\tau\), we have \(x<\lambda\).
  By~\eqref{eq:idealh}, on \([0,1]\),
  \[
    h'_{\lambda,x}(b)=
    \begin{cases}
      x-\dfrac{\lambda x}{1-b}, & 0\le b\le1-x, \\
      x-\lambda,                & 1-x\le b\le1.
    \end{cases}
  \]
  When \(0\le b\le1-x\), we have \(\lambda x/(1-b)\le\lambda\), so
  \[
    x-\lambda\le h'_{\lambda,x}(b)\le x(1-\lambda).
  \]
  Together with the derivative on the second interval and \(x<\lambda\),
  this gives \(|h'_{\lambda,x}(b)|\le\lambda\).
  Therefore, for \(s,t\in[0,1]\),
  \[
    |h_{\lambda,x}(s)-h_{\lambda,x}(t)|\le\lambda|s-t|,
    \qquad |h_{\lambda,x}(b)|\le\lambda b,
  \]
  where the second inequality uses \(h_{\lambda,x}(0)=0\).

  For \(\varepsilon<b\le1\), we have \(0\le k(b)\le b\le1\) and
  \((1-\ell)|k(b)-b|=\ell(1-b)\).
  Equation~\eqref{eq:hG} gives
  \begin{equation*}
    \begin{aligned}
      |h_G(b)-h_{\lambda,x}(b)|
       & \le x\varepsilon(1-\lambda)D_F(z)
      +(1-\ell)|h_{\lambda,x}(k(b))-h_{\lambda,x}(b)|
      +\ell|h_{\lambda,x}(b)|                                             \\
       & \le x\varepsilon(1-\lambda)D_F(z)+\lambda\ell(1-b)+\lambda\ell b \\
       & \le \bigl[x(1-\lambda)+\lambda\bigr]\ell
      \le\ell\le\varepsilon,
    \end{aligned}
  \end{equation*}
  Here we used \(D_F(z)\le z\), \(\ell=\varepsilon z\), and \(x\le1\).

  Let \(M=H_G(\varepsilon)\).
  Since \(h_G(b)\ge-\lambda D_G(b)\ge-\lambda b\) and \(h_G(0)=0\),
  we have \(-\lambda\varepsilon\le M\le0\).
  Lemma~\ref{lem:second} gives \(H_G(u)=M\) for \(\varepsilon\le u\le\tau\).
  Hence \(h_G(t)\ge M\ge-\varepsilon\) for every \(0\le t\le\tau\).

  Fix \(\tau\le u\le1\). Since \(H_G(u)\le h_G(u)\),
  \[
    H_G(u)\le h_{\lambda,x}(u)+\varepsilon.
  \]
  For \(0\le t\le\tau\), the inequality \(h_{\lambda,x}(u)\le0\) gives
  \[
    h_G(t)\ge-\varepsilon\ge h_{\lambda,x}(u)-\varepsilon.
  \]
  For \(\tau\le t\le u\), since \(h_{\lambda,x}\) is decreasing on \([\tau,1]\),
  \[
    h_G(t)\ge h_{\lambda,x}(t)-\varepsilon
    \ge h_{\lambda,x}(u)-\varepsilon.
  \]
  Taking the minimum over \(t\in[0,u]\) gives
  \(H_G(u)\ge h_{\lambda,x}(u)-\varepsilon\).
  By~\eqref{eq:idealH},
  \[
    |H_G(u)-H_{\lambda,x}(u)|
    =|H_G(u)-h_{\lambda,x}(u)|\le\varepsilon
    \qquad(\tau\le u\le1).
  \]
  When \(u\ge1\), the two functions equal \(H_G(1)\) and
  \(H_{\lambda,x}(1)\), respectively, so the same error bound holds.
  The bound is independent of \(F\).
\end{proof}

The three lemmas control the errors on the respective intervals.
Combining them proves Lemma~\ref{prop:recursion}.

\begin{proof}[Proof of Lemma~\ref{prop:recursion}]
  Set \(\varepsilon_0=\min\{1/2,\tau/2\}\), and assume
  \(0<\varepsilon<\varepsilon_0\).
  The proofs of the three lemmas above apply throughout this range, and all
  error constants depend only on \(\lambda,x\).
  The first interval uses \(\Delta(b)\le xb^2\), and the third uses
  \(\ell\le\varepsilon\),
  \(|k(b)-b|\le\varepsilon/(1-\varepsilon)\le2\varepsilon\), and
  \(0\le D_F(z)\le z\le1\).
  These bounds and \(\varepsilon_0\) are independent of \(F\) and its
  recursion depth.
  Also, \(C_F(b)\ge0\), \(D_F(b)\le b\), and \(h_F(0)=0\) give
  \[
    -\lambda\le H_F(1)\le0.
  \]
  By Lemma~\ref{lem:first}, substituting \(t=u/\varepsilon\) yields
  \begin{equation*}
    \begin{aligned}
      \int_0^\varepsilon\frac{H_G(u)}{u^2}\,du
       & =x\int_0^1\frac{H_F(t)}{t^2}\,dt+O(\varepsilon)   \\
       & =x\bigl[L_\lambda(F)-H_F(1)\bigr]+O(\varepsilon),
    \end{aligned}
  \end{equation*}
  where we used~\eqref{eq:Htail}.
  Lemma~\ref{lem:second} and the estimate in Lemma~\ref{lem:first} at
  \(u=\varepsilon\) give
  \begin{equation*}
    \begin{aligned}
      \int_\varepsilon^\tau\frac{H_G(u)}{u^2}\,du
       & =\bigl[x\varepsilon H_F(1)+O(\varepsilon^2)\bigr]
      \left(\frac1\varepsilon-\frac1\tau\right)            \\
       & =xH_F(1)+O(\varepsilon).
    \end{aligned}
  \end{equation*}
  This uses \(|H_F(1)|\le\lambda\), so the error bound remains independent
  of \(F\).
  By Lemma~\ref{lem:third} and~\eqref{eq:idealH},
  \[
    \int_\tau^\infty\frac{H_G(u)}{u^2}\,du
    =J_\lambda(x)+O(\varepsilon),
  \]
  Adding the integrals over the three intervals gives
  \[
    L_\lambda(G)=xL_\lambda(F)+J_\lambda(x)+O(\varepsilon).
  \]
  More explicitly, bound their respective remainders by \(C_1\varepsilon\),
  \(C_2\varepsilon\), and \(C_3\varepsilon\), where \(C_1,C_2,C_3\) depend
  only on \(\lambda,x\). Then
  \[
    \left|L_\lambda(G)-xL_\lambda(F)-J_\lambda(x)\right|
    \le(C_1+C_2+C_3)\varepsilon.
  \]
  Taking \(C=C_1+C_2+C_3\) proves~\eqref{eq:recursion}, and the same
  \(C,\varepsilon_0\) work for all \(F\in\mathcal F\).
\end{proof}

\subsection{Proof of Lemma~\ref{lem:unroll}}\label{subsec:unrollproof}

\begin{proof}[Proof of Lemma~\ref{lem:unroll}]
  For \(\delta_1\), we have \(H_{\lambda,\delta_1}\equiv0\), hence
  \(L_\lambda(\delta_1)=0\).
  Write the constants supplied by Lemma~\ref{prop:recursion} at \(x=q,r\)
  as \(C_q,C_r\) and \(\varepsilon_{0,q},\varepsilon_{0,r}\), respectively.
  Set \(\varepsilon_0=\min\{\varepsilon_{0,q},\varepsilon_{0,r}\}\) and take
  \(0<\varepsilon<\varepsilon_0\).
  Let \(L_k=L_\lambda(X_{q,\varepsilon,k})\). By~\eqref{eq:recursion},
  \[
    L_0=0,\qquad
    L_k=qL_{k-1}+J_\lambda(q)+e_k,
    \qquad |e_k|\le C_q\varepsilon\quad(1\le k\le N).
  \]
  Successive substitution gives
  \[
    L_N=\sum_{k=1}^N q^{N-k}\bigl[J_\lambda(q)+e_k\bigr]
    =\frac{1-q^N}{1-q}J_\lambda(q)+\sum_{k=1}^N q^{N-k}e_k.
  \]
  Applying~\eqref{eq:recursion} once more to the outermost layer yields
  \[
    L_\lambda(G_{q,r,N,\varepsilon})=rL_N+J_\lambda(r)+e_{\mathrm{out}},
    \qquad |e_{\mathrm{out}}|\le C_r\varepsilon.
  \]
  The total remainder therefore satisfies
  \[
    \left|r\sum_{k=1}^N q^{N-k}e_k+e_{\mathrm{out}}\right|
    \le\left(rC_q\frac{1-q^N}{1-q}+C_r\right)\varepsilon
    \le\left(\frac{rC_q}{1-q}+C_r\right)\varepsilon.
  \]
  When \(N=0\), all empty sums are \(0\).
  Since \(0<q<1\), both the error bound and the restriction on \(\varepsilon\)
  are independent of \(N\), proving the claim.
\end{proof}

\subsection{Proof of Lemma~\ref{lem:bo}}\label{subsec:SOproof}

We first analyze seller offers for an arbitrary \(F\in\mathcal F\) and
the outer distribution \(G=T_{r,\varepsilon}F\), where \(0<r<1-\tau\).
Take \(B=B_{\lambda,G}\), and write \(U=U_G\) and \(a=a(G)>0\).
We then apply the result to \(F=X_{q,\varepsilon,N}\) and
\(G=G_{q,r,N,\varepsilon}\in\mathcal F\).
We have \(h_{\lambda,r}(1)<0\), and the proof of Lemma~\ref{lem:third} gives
\(h_G(1)=h_{\lambda,r}(1)+O(\varepsilon)\).
On the other hand, \(h_G(b)\ge-\lambda b\) gives
\[
  \min_{0\le b\le\varepsilon}h_G(b)=O(\varepsilon).
\]
Thus, for sufficiently small \(\varepsilon\),
\[
  h_G(1)<\min_{0\le b\le\varepsilon}h_G(b),
\]
The required restriction on the scale depends only on \(\lambda,r\),
independently of \(F\).
On \((\varepsilon,1]\), equation~\eqref{eq:hG} gives
\[
  h_G(b)=r\varepsilon(1-\lambda)D_F(z)
  +(1-\ell)h_{\lambda,r}(k(b)).
\]
The unique minimizer of \(h_{\lambda,r}\) on \([0,1]\) is \(1\), and
\(k(1)=1\), so \(h_G(b)>h_G(1)\) for \(\varepsilon<b<1\).
For \(b>1\), we have \(h_G(b)\ge0\).
Thus, the unique global minimizer of \(h_G\) is \(1\).
Equation~\eqref{eq:buyer} gives
\begin{equation}
  m_{\lambda,G}(u)=1\quad(u\ge1),\qquad B\le1,\qquad \P(B=1)=\P(U\ge1)=a.
  \label{eq:topatom}
\end{equation}

For a seller with cost \(s\in[a,z']\), offering price \(1\) yields profit
\(a(1-s)>0\).
For any \(s\le p<1\), the inequality \(B\le U\) gives
\[
  (p-s)\P(B\ge p)\le(p-s)\frac ap
  =a\left(1-\frac sp\right)<a(1-s).
\]
Offers below \(s\) yield nonpositive profit, and offers above \(1\) are
never accepted. Thus, every seller's unique optimal offer is \(1\), and
\begin{equation}
  \SO=a\E[1-S]=aD_G(1).
  \label{eq:SOexact}
\end{equation}

By~\eqref{eq:DG}, noting that \(k(1)=(1-\ell)/(1-\ell)=1\), we have
\begin{equation}
  \begin{aligned}
    D_G(1)
     & =r\varepsilon D_F(z)+(1-\ell)D_r(k(1)) \\
     & =O(\varepsilon)+(1-\ell)D_r(1)         \\
     & =D_r(1)+O(\varepsilon)                 \\
     & =-r\log r+r+O(\varepsilon).
  \end{aligned}
  \label{eq:Dbound}
\end{equation}
Here \(0\le D_F(z)\le z\le1\) gives
\(0\le r\varepsilon D_F(z)\le r\varepsilon\), while
\(\ell=\varepsilon z\le\varepsilon\) gives
\(0\le\ell D_r(1)\le\varepsilon D_r(1)\).
Thus, both remainders above are \(O(\varepsilon)\), with implicit constants
depending only on \(r\).
Taking \(F=X_{q,\varepsilon,N}\) yields~\eqref{eq:SOannounced} for the
specific construction.
The scale restriction depends only on \(\lambda,r\), and the error bound
is independent of the inner distribution \(F\), \(q\), and \(N\).

\end{document}